\documentclass[12pt]{article}

\usepackage{graphicx}
\usepackage{subcaption}
\usepackage{amsmath}
\usepackage{dsfont}
\usepackage{placeins}
\usepackage{amssymb}
\usepackage{abstract}
\usepackage[auth-sc,affil-sl]{authblk}
\usepackage{hyperref}
\usepackage{apacite}
\usepackage{enumerate}
\usepackage{fancyhdr}
\usepackage[xcdraw,table]{xcolor}
\usepackage{natbib}
\usepackage{algorithm}
\usepackage{algorithmicx}
\usepackage{algcompatible}
\usepackage{algpseudocode}
\usepackage[table]{xcolor}
\usepackage{multirow}
\usepackage{amsthm}


\usepackage{algpseudocode}
\usepackage{amsthm}
\usepackage{hyperref}
\usepackage{mathrsfs}
\usepackage{amsfonts}
\usepackage{bbding}
\usepackage{listings}
\usepackage{appendix}
\usepackage[margin=1.00in]{geometry}
\usepackage{fancyvrb}
\geometry{papersize={8.5in,11in},total={6.5in,9in}}
\usepackage{cancel}
\newtheorem{theorem}{Theorem}[section]
\newtheorem{lemma}{Lemma}[theorem]
\newtheorem{corollary}{Corollary}[theorem]

\newtheorem{assumption}{Assumption}[section]

\newcounter{probnum}
\setcounter{probnum}{1}


\allowdisplaybreaks

\definecolor{tabblue}{rgb}{.870588,.905882,.94902}
\definecolor{gray}{rgb}{0.7,0.7,0.7}
\definecolor{black}{rgb}{0,0,0}
\definecolor{white}{rgb}{1,1,1}
\definecolor{blue}{rgb}{0.0,0.0,1}

\definecolor{green}{rgb}{0,0.5,0}

\definecolor{yellow}{rgb}{1,0.549,0}

\definecolor{red}{rgb}{0.6,0.0,0.0}

\definecolor{darkred}{rgb}{0.9,0.4,0}

\definecolor{purple}{rgb}{0.58,0,0.827}

\definecolor{backgcode}{rgb}{0.97,0.97,0.8}
\definecolor{Brown}{cmyk}{0,0.81,1,0.60}
\definecolor{OliveGreen}{cmyk}{0.64,0,0.95,0.40}
\definecolor{CadetBlue}{cmyk}{0.62,0.57,0.23,0}




\newcommand{\qu}[1]{``{#1}''}
\newcommand{\bv}[1]{\boldsymbol{#1}}



\newcommand{\sigsq}{\sigma^2}

\newcommand{\betaT}{\beta}

\newcommand{\POW}{\mathcal{P}}

\newcommand{\betaThat}{\hat{\beta}}
\newcommand{\betaThatexp}{\hat{\beta}_{\text{run}}}

\newcommand{\Ybar}{\bar{Y}}
\newcommand{\YbarT}{\Ybar_T}
\newcommand{\YbarC}{\Ybar_C}

\newcommand{\allocspace}{\mathbb{W}}
\newcommand{\allocspaceR}{{\allocspace}_{2R}}

\newcommand{\iid}{~{\buildrel iid \over \sim}~}

\newcommand{\Op}[1]{\mathcal{O}_p\parens{#1}}

\newcommand{\half}{\frac{1}{2}}


\newcommand{\B}{\bv{B}}

\newcommand{\M}{\bv{M}}
\newcommand{\m}{\bv{m}}

\newcommand{\x}{\bv{x}}

\newcommand{\w}{\bv{w}}
\newcommand{\wexp}{\w_{\text{run}}}
\newcommand{\yexp}{\y_{\text{run}}}

\newcommand{\quantile}[2]{\text{Quantile}\bracks{#1, #2}}
\newcommand{\onevec}{\bv{1}}
\newcommand{\zerovec}{\bv{0}}

\newcommand{\y}{\bv{y}}

\newcommand{\z}{\bv{z}}

\newcommand{\twovec}[2]{\bracks{\begin{array}{c} #1 \\ #2 \end{array}}}

\newcommand{\twobytwomat}[4]{\bracks{\begin{array}{cc} #1 & #2 \\ #3 & #4 \end{array}}}


\newcommand{\reals}{\mathbb{R}}

\newcommand{\naturals}{\mathbb{N}}

\newcommand{\floor}[1]{\left\lfloor #1 \right\rfloor}

\newcommand{\beqn}{\vspace{-0.25cm}\begin{eqnarray*}}
\newcommand{\eeqn}{\end{eqnarray*}}
\newcommand{\bneqn}{\vspace{-0.25cm}\begin{eqnarray}}
\newcommand{\eneqn}{\end{eqnarray}}
\newcommand{\benum}{\begin{enumerate}}
\newcommand{\eenum}{\end{enumerate}}

\newcommand{\parens}[1]{\left(#1\right)}
\newcommand{\squared}[1]{\parens{#1}^2}

\newcommand{\prob}[1]{\mathbb{P}\parens{#1}}
\newcommand{\probsub}[2]{\mathbb{P}_{#1}\parens{#2}}
\newcommand{\cprob}[2]{\prob{#1~|~#2}}

\newcommand{\bracks}[1]{\left[#1\right]}
\newcommand{\braces}[1]{\left\{#1\right\}}

\newcommand{\abss}[1]{\left|#1\right|}


\newcommand{\expe}[1]{\mathbb{E}\bracks{#1}}

\newcommand{\expesub}[2]{\mathbb{E}_{\,#1}\bracks{#2}}

\newcommand{\indic}[1]{\mathds{1}_{#1}}
\newcommand{\var}[1]{\mathbb{V}\text{ar}\bracks{#1}}
\newcommand{\varsub}[2]{\mathbb{V}\text{ar}_{#1}\bracks{#2}}

\newcommand{\corr}[2]{\text{Corr}\bracks{#1, #2}}

\newcommand{\sesub}[2]{\mathbb{S}\text{E}_{#1}\bracks{#2}}


\newcommand{\oneover}[1]{\frac{1}{#1}}
\newcommand{\overtwo}[1]{\frac{#1}{2}}
\newcommand{\overn}[1]{\frac{#1}{n}}
\newcommand{\oneoversqrt}[1]{\oneover{\sqrt{#1}}}


\newcommand{\mathor}{~~\text{or}~~}


\newcommand{\multnormnot}[3]{\mathcal{N}_{#1}\parens{#2,\,#3}}

\newcommand{\normnot}[2]{\mathcal{N}\parens{#1,\,#2}}

\newcommand{\stdnormnot}{\normnot{0}{1}}

\newcommand{\binomial}[2]{\mathrm{Binomial}\parens{#1,\,#2}}
\newcommand{\hyper}[3]{\mathrm{Hypergeometric}\parens{#1,\,#2,\,#3}}




\newcommand{\zeroonecl}{\bracks{0,1}}



\newcommand{\convd}{~{\buildrel \mathcal{D} \over \longrightarrow}~}






\newcommand{\ourtitle}{Improving the Power of the Randomization Test}
\title{
\ourtitle
}

\author[1]{Abba M. Krieger\thanks{Electronic address: \texttt{krieger@wharton.upenn.edu}; Corresponding author}}
\author[2]{David Azriel\thanks{Electronic address: \texttt{davidazr@technion.ac.il}; Corresponding author}}
\author[3]{Michael Sklar\thanks{Electronic address: \texttt{sklarm@stanford.edu}; Corresponding author}}
\author[4]{Adam Kapelner\thanks{Electronic address: \texttt{kapelner@qc.cuny.edu}; Principal Corresponding author}}

\affil[4]{\small Department of Mathematics, Queens College, CUNY}
\affil[1]{Department of Statistics, The Wharton School of the University of Pennsylvania}
\affil[3]{Department of Statistics, Stanford University}
\affil[2]{Faculty of Industrial Engineering and Management, The Technion, Haifa, Israel}

\begin{document}
\maketitle

\begin{abstract}
We consider the problem of evaluating designs for a two-arm randomized experiment with the criterion being the power of the randomization test for the one-sided null hypothesis. Our evaluation assumes a response that is linear in one observed covariate, an unobserved component and an additive treatment effect where the only randomness comes from the treatment allocations. It is well-known that the power depends on the allocations' imbalance in the observed covariate and this is the reason for the classic restriced designs such as rerandomization. We show that power is also affected by two other design choices: the number of allocations in the design and the degree of linear dependence among the allocations. We prove that the more allocations, the higher the power and the lower the variability in the power. Designs that feature greater independence of allocations are also shown to have higher performance. 

Our theoretical findings and extensive simulation studies imply that the designs with the highest power provide thousands of highly independent allocations that each provide nominal imbalance in the observed covariates. These high powered designs exhibit less randomization than complete randomization and more randomization than recently proposed designs based on numerical optimization. Model choices for a practicing experimenter are rerandomization and greedy pair switching, where both outperform complete randomization and numerical optimization. The tradeoff we find also provides a means to specify the imbalance threshold parameter when rerandomizing.

\end{abstract}
\vspace{2.5cm}
\pagebreak

\section{Introduction}\label{sec:intro}

Our goal is to examine experimental power when testing for a positive treatment effect via the randomization test after a classic treatment-control experiment. The \emph{subjects} have a continuous \emph{response} and each subject is \emph{assigned} to the treatment group ($T$) and a control group ($C$) and subjects' \emph{covariates} are known beforehand and considered fixed. This non-sequential setting was studied by \citet{Fisher1925} when assigning treatments to agricultural plots and is still of great importance today. They occur in clinical trials as \qu{many phase I studies use `banks' of healthy volunteers ... [and] ... in most cluster randomised trials, the clusters are identified before treatment is started} \citep[page 1440]{Senn2013}. 

The practitioner has a choice before the experiment begins that can affect the power: the experimental \emph{design} --- the set of \emph{allocations} drawn from when assigning the subjects to either treatment or control. The naive design is complete randomization (all possible assignments) and well-studied alternatives include rerandomization (a subset of of assignments that satisfy a similarity condition of the covariate values in the two groups) and 1:1 matching.

To investigate experimental power, we formalize our notation, assume a response model and then compute explicit expressions for the power of the randomization test in Section~\ref{sec:formulation}. During this exercise we reveal salient features of the design that are the drivers of power. Section~\ref{sec:results} then investigates power of common designs and provides mathematical results and Section~\ref{sec:simulations}  provides simulations. We conclude with practical advice and future directions in our discussion, Section~\ref{sec:discussion}.

\section{Formulation}\label{sec:formulation}


We denote the responses $\y = \bracks{y_1, \ldots, y_n}^\top$ where the number of subjects $n$ is assumed even. The covariate values for all subjects is denoted $\x = \bracks{x_1, \ldots, x_n}^\top$ which is assumed centered and scaled. The \emph{assignment} or \emph{allocation} vector is $\w = \bracks{w_1, \ldots, w_n}^\top$ whose entries are either +1 (the subject received $T$) or -1 (the subject received a $C$) and $\w \in \braces{-1,+1}^n$. The \emph{design} $D$ we define as a discrete uniform random variable with support $\allocspace_D \subseteq \braces{-1,+1}^n$. The subset of designs we investigate are termed \emph{forced balance procedures} where all allocations have the same number of treated and control subjects \citep[Chapter 3.3]{Rosenberger2016}. This is a minor restriction denoted as $\allocspace_{FB} = \braces{\w\,:\, \w^\top \onevec_n = 0}$ which has $\binom{n}{n/2}$ allocations. Herein, every design is distinguished by its specific $\allocspace_D \subseteq \allocspace_{FB}$. We further assume all assignments in all $D$ divide the subjects into two subsets of equal size. We further assume that the subset assigned to treatment is chosen with probability 1/2. We call this the mirror property of our designs $D$. 

We will examine a few designs in this paper: the balanced complete randomization design (BCRD) is specified by $\allocspace_{FB}$, rerandomization design by $\braces{\w\,:\, \abss{\w^\top \x} \leq a} \cap \allocspace_{FB}$ where $a$ is a threshold of covariate imbalance (to be elaborated upon later), pairwise matching design by $\braces{\w\,:\,w_r = -w_s}$ for unordered pairs with indices $\braces{r,s}$ in the set of binary matches (which is naturally a subset of $\allocspace_{FB}$) and the greedy pair-switching design of \citet{Krieger2019} which is a subset of $\allocspace_{FB}$ with covariate imbalance provably lower than the previous designs.

We assume the following response model

\bneqn\label{eq:simple_model}
\y = \betaT \w + \beta_x \x + \z
\eneqn

\noindent where $\z$ is the unexplained but fixed component after an additive treatment effect and a linear covariate effect. The only source of the randomness in the response is thus the treatment assignments $\w$. This assumption on the source of randomness is terms the \emph{randomization model} \citep[Chapter 6.3]{Rosenberger2016}, the \qu{Fisher model} or the \qu{Neyman model} whereby \qu{the $n$ subjects are the population of interest; they are not assumed to be randomly drawn from a superpopulation} \citep[page 297]{Lin2013}.

Our focus is to assess $H_a: \betaT > 0$ can be shown beyond a reasonable doubt over $H_0: \betaT \leq 0$ at level $\alpha$. We employ the simple \emph{differences-in-means} estimator, 

\bneqn\label{eq:estimator}
\betaThat := \overn{\w^\top \y} = \half(\YbarT - \YbarC)
\eneqn 

\noindent where the equality follows from our assumption of forced balance (i.e. allocations have the same number of treated and control subjects). Our criterion to evaluate experimental designs is the statistical power of the randomization test at level $\alpha$.

To \qu{run an experiment}, a $\wexp$ is first chosen at random from $\allocspace_D$ whose size is denoted $R_D := \abss{\allocspace_D}$. The allocation $\wexp$ is used to generate the responses $\yexp$ in Equation~\ref{eq:simple_model} and the estimate of the treatment effect $\betaThatexp$. Our previous work \citep{Kapelner2020} examined the effect of $D$ on the mean squared error of this estimator in the same settings.

The decision of the randomization test compares this estimate to the \emph{null distribution}, $\betaThat~|~H_0 \sim \text{Unif}\parens{\braces{\w^\top \yexp / n\,:\, \w \in \allocspace_D}}$, and if larger than the $1-\alpha$ quantile, $H_0$ is rejected. The power of the randomization test considers a universe of each $\wexp$ composing $R_D$ experiments with $R_D$ different experimental responses. 

In practice, since $\allocspace_D$ is exponentially large for the designs we consider, we can use an approximation found in \citet[page 636]{Lehmann2006}. We randomly choose a unique subset $\allocspaceR := \braces{\w_1, -\w_1, \w_2, -\w_2, \ldots, \w_{R}, -\w_{R}} \subset \allocspace_D$ that has $R$ mirrored pairs and thus a total of $2R$ vectors, an approximation method that dates back to \citet{Dwass1957} for permutation testing. (For common designs, $2R \ll R_D$ but for some designs whose allocations satisfy an \qu{optimality} criterion it may not be so). In the discussion that follows, we employ these $2R$ vectors to make an approximate experimental decision and approximate power computation. $R$ is a choice of the experimenter limited only by computational constraints. We demonstrate later that its choice is pivotal to power. 

Consider $\M$, the $2R \times 2R$ matrix of estimates where the row $i$ indexes the experimental run allocation, i.e. $\wexp = \w_i$, and the column $j$ indexes the allocation corresponding to the $j$th element of its null distribution. For convenience, the rows of $\M$ are organized by the mirrored couples i.e. the first row corresponds to $\w_1$ and the second row corresponds to $-\w_1$ and the third row corresponds to $\w_2$, etc. The diagonal elements are the estimators of the $\betaThatexp$ for each $\wexp$.

For any run $\w_i$, the null hypothesis is rejected if the element in the $i$th column is greater than $\quantile{\m_{i\cdot}}{1 - \alpha}$ where $\m_{i\cdot}$ denotes the $i$th row vector of $\M$. We examine approximate experimental power, denoted $\POW_{\z, \allocspaceR}$, the proportion of rejections over our subset of $2R \ll R_D$ run allocations,

\bneqn\label{eq:pow}
\POW_{\z, \allocspaceR} := \oneover{2R} \sum_{i=1}^{2R} \indic{m_{i,i} > \quantile{\m_{i\cdot}}{1 - \alpha}} 
\eneqn

\noindent where $\indic{\cdot}$ denotes the indicator function, $m_{i,j}$ denotes the $i,j$ element of $\M$ and the conditioning on $\z$ emphasizes the dependence on the unobserved component of the response. The exact power would be the expectation over the entire allocations space $\allocspace_D$. This computation is seldom possible because for most designs, $R_D$ is practically infinite at even modest sample sizes. 
 
To understand how these power expressions are dependent on the choice of design, we express the matrix entries as:

\bneqn\label{eq:m_entry}
m_{ij} = \beta r_{ij}  + \beta_x B_{x,j} + B_{z,j}
\eneqn

\noindent where $r_{ij} := \w_i^\top \w_j / n$, $B_{x,j} := \w_j^\top \x / n$ and $B_{z,j} := \w_j^\top \z / n$, termed pairwise allocation correlation, imbalance in the observed covariate and imbalance in the unobserved response component respectively. Note that $\wexp$ affects only the first term above.

The quantity $\w_i^\top \w_j$ is the number of allocations among two designs that agree save the number of allocations that disagree and thus $r_{ij} = 2f_{ij} - 1$ where $f_{ij}$ is the fraction of allocations that agree in the two designs. Also $r_{ij}$ varies between $\bracks{-1, +1}$ and $r_{ii} = 1$.

In each experimental run (each row of $\M$), we compare null estimates with the data-generated estimate $m_{ii}$ and desire $m_{ii} > m_{ij}$ for most $j$ to reject. If this is true for a large proportion of the $R$ experimental runs, we reject the null often resulting in high power. Hence the following difference quantity being positive with high probability is of great importance:

\bneqn\label{eq:betahat_diff}
m_{ii} - m_{ij} &=& (\beta + \beta_x B_{x,i} + B_{z,i}) - (r_{ij} \beta + \beta_x B_{x,j} + B_{z,j}) \nonumber \\
&=& \underbrace{(B_{x,i} - B_{x,j})}_{\text{I}} +  \underbrace{(B_{z,i} - B_{z,j})}_{\text{II}} +\underbrace{\beta (1 - r_{ij})}_{\text{III}}
\eneqn

\noindent This shows that power fundamentally depends on (I) the imbalance in the observed covariates, (II) the imbalance in the unobserved response component and (III) the correlations among the set of allocations. All terms are functions of $\w$ making them all critically dependent on the choice of experimental design. The experimental settings and response model (i.e. $n$, $\beta$ and $\beta_x$ and $\sigsq_z$) affect power: $\beta_x$ relative to $\sigsq_z$ (similar to $R$-squared) will determine how much the observed covariate balance (term I) will affect the power, the sample size $n$ has well-known asymptotics and $\beta$ increasing will always increase power. But how can power be maximized by the experimentalist through $D$? 

First note that III is always positive and I and II are positive for half the allocations $i$ and negative for the other half of the allocations (corresponding to the mirrored allocations). This is why, as a general strategy, using designs that make terms (I) and (II) close to zero in absolute value maximizes the power of the randomization test. 




In term (I), since $\x$ is observed, the design can be tailored to make $B_x$ small in absolute value for all allocations. If BCRD is employed, (I) $= \Op{n^{-1/2}}$, pairwise matching design yields (I) $= \Op{n^{-1}}$, the randomization design makes (I) small through choosing threshold $a$ to be small and greedy pair-switching has (I) $= \Op{n^{-3}}$.

In term (II), since $\z$ is not observed, the design cannot be tailored to make $B_z$ small in absolute value for any allocation. One may have thought that the design does not affect this term, but we see above and in our previous work \citep{Kapelner2020} that surprisingly it design does affect what is unseen. The key is the relatonship between $B_{z,i}$ and $B_{z,j}$; their correlation is

\bneqn\label{eq:corrBziBzj}
\corr{B_{z,i}}{B_{z,j}} = r_{ij}.
\eneqn

\noindent Thus, independent of design, term (II) has variance $2\sigsq_z(1 - r_{ij}) / n$ where $\sigsq_z$ is the population variance of $\z$. Although it is always $\Op{n^{-1}}$, their is a crucial constant that is determined solely by our choice of design: the smaller the $r_{ij}$ values within the elements of $\allocspaceR$, the higher the power.

The order of term (III) depends on the behavior of $r_{ij}$ and since the expected $r_{ij}$ is zero due to our mirror assumption, this order will be the standard deviation of $r_{ij}$. BCRD features $r_{ij} = (2U-1)/n$ where $U \sim \hyper{n}{n/2}{1/2}$. Since $(n/2-1)/(n-1) \approx 1/2$, the standard deviation over allocations of $r_{ij}$ is  approximately $1/\sqrt{4n}$ and greedy pair-switching is the same. Pairwise matching design has $U \sim \binomial{n/2}{1/2}$ and thus the standard deviation over allocations of $r_{ij}$ is exactly $1 / \sqrt{2n}$. The behavior of the standard deviation of $r_{ij}$ in the rerandomization design is not known.

In addition to the imbalance in $\x$, the choice of design affects power through the pairwise allocation correlations $r_{ij}$'s but this relationship is complicated. In term (II) power is increased with large pairwise allocation correlations and in term (III) power is increased by making small pairwise correlations. Theoretical results in the next section show that term (III) dominates and thus designs that feature pairwise correlations as small as possible are preferred. The next section also shows the prominent role of the number of allocations in the design $R$ which is not apparent from this analysis.

\section{Results}\label{sec:results}

As shown in the previous section, the experimentalist's design decision affects power of the randomization test through (1) the number of unmirrored allocations $R$ and (2) the dependence of the allocations within the designs, $r_{ij}$'s. The goal of this section is to prove theoretical results about the behavior of these two parameters. We begin by rewriting Equation~\ref{eq:pow} as

\beqn
\POW_{\z, \allocspaceR} = \oneover{R} \sum_{i=1}^{R} (I_i + I_i^m)
\eeqn

\noindent where $I_i$ denotes the indicator that the treatment effect estimator when $\wexp = \w_i$ is higher than all but the top $1- \alpha$ proportion of estimates from the other $2R - 1$ allocations (i.e. $H_0$ is rejected) and $I_i^m$ is the analogous indicator for the mirror allocation, $\wexp = -\w_i$. We now define our power metric, which averages the above expression over $B_z$ and all subsets $\allocspaceR$ of the full allocation space,
 
\bneqn\label{eq:def_power_metric}
\POW := \expesub{B_z, \allocspaceR}{\POW_{\z, \allocspaceR} } = \oneover{2R} \sum_{i=1}^{R} (\prob{I_i = 1} + \prob{I_i^m = 1}) = \overtwo{\prob{I_1 = 1} + \prob{I_1^m = 1}}
\eneqn

\noindent where the last equality follows without loss of generality. The difficulty in determining $\prob{I_1 = 1}$ is that the computation involves order statistics of correlated random variables. To even feasibly compute $\prob{I_1 = 1}$, we must make a few simplifying assumptions:

\begin{assumption}[Normality of the Imbalances in the Unseen Response Component]\label{ass:Bznorm}~\\
$B_{z,j} := \sum_{\ell=1}^n w_{j,\ell} z_{\ell} / n \sim \normnot{0}{\sigsq_z / n}$.
\end{assumption}

\begin{assumption}[Uniformity of the absolute allocation correlations]\label{ass:all_rho}~\\
$\rho := \abss{r_{ij}}$ for all allocations $i \neq j$ where $i,j = 1, \ldots, R$.
\end{assumption}

\begin{assumption}[Trivial observed covariate imbalance]\label{ass:noBx}~\\
$B_{x,i} = 0$ for all allocations $i = 1, \ldots, 2R$.
\end{assumption}

Assumption~\ref{ass:Bznorm} involves an extension of the finite central limit theorem from \citet{Li2017} who follows \citet{Hajek1961} proving that for BCRD, $\sqrt{n} B_z \convd \normnot{0}{\sigsq_z}$. This simplifying assumption is reasonable in this context given a large number of allocation vectors and sufficient sample size. This assumption removes the dependence of power on the $n$ fixed values of the $z_\ell$'s shifting the dependence to more tractable realizations from a normal model. Assumption~\ref{ass:all_rho} simplifies the dependence of power on of order $R^2$ number of correlation parameters to just one parameter $\rho$. It also simplifies our power metric (Equation~\ref{eq:def_power_metric}) where the expectation is no longer taken over different subsets $\allocspaceR$ as this assumption implies each allocation vector subset is equivalent in this context. Assumption~\ref{ass:noBx} is effectively true in all designs we consider save BCRD. Also, for convenience we let $\sigsq_z = 1$ which do not change any of our theoretical results that follow. These assumptions may seem strong, but simulation results in Section~\ref{sec:simulations} comport with our theoretical results in this section; thus, these assumptions do not seem to be restrictive.

These assumptions allow us to make progress on computing the probabilities $\prob{I_1 = 1}$ and $\prob{I_1^m = 1}$. To do so, we must make comparisons of the estimator under $\wexp = \w_1$ to the other $2R - 1$ estimators under other allocations. To emphasize that these are random variables, we denote these estimators as $V_{1,1}, V_{1,1}^m, V_{1,2}, V_{1,2}^m, \ldots, V_{1,R}, V_{1,R}^m$ (which are the random variables in row $\m_{1 \cdot}$ computed under the three assumptions). Multiplying through by $\sqrt{n}$, letting $\gamma := \sqrt{n} \beta$ implies that $\sqrt{n} \B_z$ is an equicorrelated standard multivariate normal where $\B_z$ is the vector of $B_{z_j}$'s for unmirrored allocations $j = 1, \ldots, R$. Hence we can write the estimators as

\beqn
V_{11} &=& \sqrt{\rho}Z_0 + \sqrt{1 - \rho} Z_1  + \gamma,\\
V_{11}^m &=& -V_{11}, \\
V_{1j} &=& \sqrt{\rho}Z_0 + \sqrt{1 - \rho} Z_j  + \rho \gamma,\\
V_{1j}^m &=& -V_{1j}.
\eeqn

\noindent where $Z_0, Z_1, \ldots, Z_R \iid \stdnormnot$. To compute

\bneqn\label{eq:prob_indicator}
\prob{I_1 = 1} = \prob{V_{1,1} > \quantile{\braces{V_{1,1}, V_{1,1}^m, V_{1,2}, V_{1,2}^m, \ldots, V_{1,R}, V_{1,R}^m}}{1 - \alpha}}
\eneqn

\noindent we first examine the case where $R \rightarrow \infty$. Here, the quantile in Equation~\ref{eq:prob_indicator} above is fixed if we condition on $Z_0 = z$ and can be found by solving

\bneqn\label{eq:qz_R_asmyptotic}
\Phi\parens{\frac{\sqrt{\rho}z+\rho \gamma-q(z)}{{\sqrt{1-\rho}}}}+ 
\Phi\parens{-\frac{\sqrt{\rho}z+\rho \gamma+q(z)}{{\sqrt{1-\rho}}}} = 2\alpha
\eneqn

\noindent for $q(z)$ where $\Phi(\cdot)$ computes the value of the CDF of the standard normal distribution. Then the probability of $\prob{I_1 = 1}$, which here is the same as the value of $\prob{I_1^m = 1}$, reduces to a normal CDF calculation which is a function of $z$. Averaging over the distribution of $Z_0$ we obtain

\bneqn\label{eq:power_asymptotic_in_R}
\lim_{R \rightarrow \infty} \POW = \int_\reals \Phi\parens{\frac{\gamma + \sqrt{\rho} z - q(z)}{\sqrt{1 - \rho}}} \phi(z)dz
\eneqn

\noindent where $\phi(\cdot)$ computes the value of the PDF of the standard normal distribution. Details can be found in Theorem~\ref{thm:power_asymptotic_in_R} in the Appendix.

We now examine the case of finite $R$. Here, we need to compare $V_{11}$ to every other estimator and count the number of times $V_{11}$ is larger. We change variables from $(Z_0, Z_1)$ to a rotation $(U, S)$ which greatly simplifies our analysis (see Equation~\ref{eq:u_and_s_change_of_variables} in the Appendix). When we condition on $U = u$ and $S = s$, the events $\abss{V_{1j}} > V_{11}$ become iid Bernoulli random variables whose probability parameter is a function of $u$ and $s$. Then $\cprob{I_1 = 1}{U = u, S= s}$ reduces to a binomial CDF calculation. We then average the computation over the distribution of both $U$ and $S$ to arrive at

\bneqn\label{eq:power_R}
\POW = \int_0^\infty \int_\reals F_{B}( \floor{2\alpha R}-1; R-1, p(u, s)) \phi(s; 0 ,\rho) \phi\parens{u; \frac{\gamma}{\sqrt{1 - \rho}}, \frac{1}{1 - \rho}} dsdu
\eneqn

\noindent where $F_B(\cdot, N, \theta)$ is the CDF of the $\binomial{N}{\theta}$ random variable, $\phi(\cdot; \mu, \sigsq)$ computes the value of the PDF of a $\normnot{\mu}{\sigsq}$ random variable and

\beqn
p(u, s) = \Phi\parens{-\parens{1-\rho} u + s} + \Phi\parens{-\parens{1+\rho}u - s}.
\eeqn

\noindent Details can be found in Theorem~\ref{thm:power_fixed_R_expression} in the Appendix. Note that power is discretized at natural number values of $2 \alpha R$ which we term \qu{attainable power} values similar to the concept of attainable $p$-values \citep{Hemerik2019}.

We would like to return to our objective which is to understand the role that $R$ and $\rho$ play in power. To gain intuition about how $R$ and $\rho$ affect the power expression of Equation~\ref{eq:power_R}, we use Monte Carlo integration to compute $\POW$ under the following settings: $R \in \{10, 30, 100, 320, 1000, 3160\} \approx 10^{\braces{1.0, 1.5, \ldots, 3.5}}$, $\rho \in \braces{0.0, 0.1, 0.2, 0.3}$ and $n \in \braces{26, 50, 100, 200}$ fixing $\alpha = 5\%$. The results are illustrated in Figure~\ref{fig:power_conv} along with the calculation of $\lim_{R \rightarrow \infty} \POW$ via Equation~\ref{eq:power_asymptotic_in_R}.

\begin{figure}[h]
\centering
\includegraphics[width=6.4in]{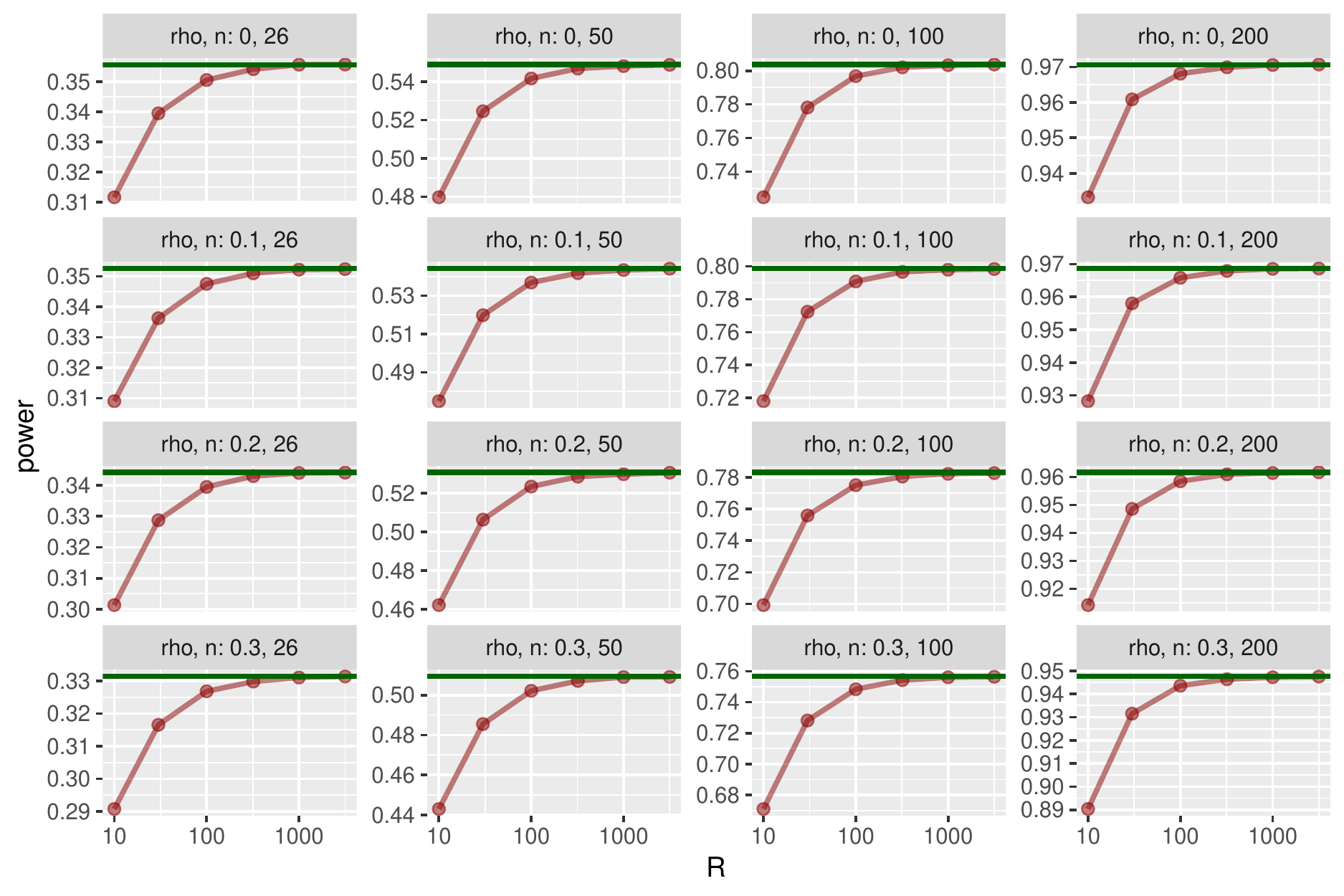}
\caption{An illustration of asymptotic power in $R$ for different fixed settings of $n$ and $\rho$ where $\alpha = 5\%$. The green horizontal line is the theoretical asymptotic power upper bound (Equation~\ref{eq:power_asymptotic_in_R}).  In red is the finite power calculated using the derived expression for power as a function of $R, \rho, \alpha$ (Equation~\ref{eq:power_R}) via Monte Carlo integration (one million samples per cell). The points indicate the values of $R$ for which finite power was computed.}
\label{fig:power_conv}
\end{figure}

Besides confirming the asymptotic power expression of Equation~\ref{eq:power_asymptotic_in_R}, these illustrations beg two conjectures (1) for any fixed $\rho$, power monotonically increases as $R$ increases and (2) for any fixed $R$, power monotonically increases as $\rho$ decreases. 

We first examine conjecture (2). Note that $\rho$ has two competing effects which complicate its overall behavior in power. High values of $\rho$ are preferable because it reduces the variance of $V_{1j} - V_{11}$. Simultaneously, low values of $\rho$ are also preferable because it reduces $V_{1j}$ by diminishing its term $\rho\gamma$. To examine which consideration is stronger, we first examine the case where $R = 2$. Through much manipulation, we prove that power increases as $\rho$ decreases (see Theorem~\ref{thm:toy_case} in the Appendix). 

To understand the role of $\rho$ when $R > 2$, we need to examine Equation~\ref{eq:power_R} which requires understanding the behavior of the random variable $T = p(U, S)$ in the binomial CDF calculation. Assuming $\gamma = 0$, Lemma~\ref{lemm:uniform_for_general_densities} shows that the measure of $T$ has a point mass $\prob{T = 1} = 1/2$ and otherwise has uniform density on $\zeroonecl$ (and this result holds beyond the normality assumption). If $\gamma > 0$, Lemma~\ref{lemm:main_support} shows $\prob{T = 1} = \Phi(-\gamma)$ and the density on $\zeroonecl$ is given by

\beqn
f_T(t; \rho, \gamma) =e^{-\frac{1}{2}\gamma^2} \int_0^\infty e^{\gamma\sqrt{1-\rho} u(a,t) }  \phi\parens{a; \, 0, \textstyle\frac{\rho}{1-\rho}} da
\eeqn

\noindent where $u = u(a, t)$ satisfies 

\beqn
\Phi(-u + a) + \Phi(-u - a) = t.
\eeqn
 
To prove conjecture (2), it is sufficient to demonstrate that there exists a $t_0$ where for $t < t_0$, $f_T(t; \rho_1, \gamma) > f_T(t; \rho_2, \gamma)$ and for $t > t_0$, $f_T(t; \rho_1, \gamma) < f_T(t; \rho_2, \gamma)$ for $\rho_1 < \rho_2$. From numerical integration for a fine grid of $t$ for many pairs of $\rho_1 < \rho_2$, $t_0$ is clearly visible. But this has so far escaped formal proof.

We now turn to Conjecture (1). It is apparent from Figure~\ref{fig:power_conv} that the role of $R$ is more salient than $\rho$. We prove this conjecture in Theorem~\ref{thm:power_increases_monotonically_as_R_increases} by first showing that $f_T$ decreases in $t$ for any $\rho$ and $\gamma > 0$ in Corollary~\ref{corr:density_negative}. We then use this fact in conjunction with the behavior of $f_T$ when $\gamma = 0$ (Lemma~\ref{lemm:uniform_for_general_densities}) and representing the CDF of the binomial as a regularized incomplete beta function allows us to complete the proof.

A further consideration is the variability of the power. Recall the power metric we focus on (Equation~\ref{eq:def_power_metric}) considers power for an average value of the imbalance of the unobserved response component $B_z$. There is variance in $B_z$ which causes instability in experimental power, $\sesub{B_z, \allocspaceR}{\POW_{\z, \allocspaceR}}$. Replacing the value of the quantile in Equation~\ref{eq:prob_indicator} by $q(z)$ from Equation~\ref{eq:qz_R_asmyptotic}, Theorem~\ref{thm:se_pow_decreases} proves that as $R$ increases, this instability monotonically decreases to a positive constant. This limiting constant is a function of $\gamma$ and $\rho$. Numerical studies show that it does not exhibit monotonicity in either of these two parameters. Also, for typical values of $\gamma$ and $\rho$, this limiting constant is large; it could be as high as $0.06$. Future work will elucidate designs that seek to minimize this value.

\section{Simulations}\label{sec:simulations}

In this section, we wish to explore the power $\POW$ of the randomization test at $\alpha = 0.05$ for different experimental design strategies and different values of $R$ and $n$. We vary $n \in \braces{26, 50, 100, 200}$ and then set the observed covariates $\x$ to be the $\{1 / (n + 1), 2 / (n + 1), \ldots,$ $n / (n+1)\}$ quantiles of the standard normal distribution. The designs considered were BCRD, rerandomization, a priori pairwise matching, the greedy pair switching of \citet{Krieger2019}. For rerandomization, we picked a threshold of $\abss{B_x}$ corresponding to the 0.1\% best out of 1,000,000 allocations from BCRD. All these designs are restricted to have an equal number of subjects $n/2$ assigned to the treatment and control groups by construction. 


We vary $R \in \{10, 30, 100, 320, 1000, 3160\}$. After sampling the $R$ vectors $\w$, we augment this set by concatenating their mirrors $-\w$ thus arriving at $2R$ vectors for each design. To generate the response we vary $\beta_x \in \braces{0, 1}$ and $\betaT \in \braces{0, 0.25}$. The positive value of $\betaT$ was selected to both induce separation among the many simulation settings and result in powers that were not close to either zero or one. In each simulation cell, we run 50 realizations from each of the considered designs i.e. 50 different subsets $\allocspaceR \subset \allocspace_D$. In order to simulate different values of $B_z$ under the asymptotic setting, we take 500 draws of the unobserved covariates $\z$ from a standard normal distribution within each design duplicate (a different value of $\sigsq_z$ would only monotonically shift our results). The responses were computed according to our theoretical setup (Equation~\ref{eq:simple_model}) and the power in each cell $\POW_{\z, \allocspaceR}$ was tabulated via
Equation~\ref{eq:pow}.

For $n=26$, the total number of allocations is $\binom{26}{13} = 10,400,600$ which is nearly the largest sample size that can comfortably be enumerated exhaustively. Thus, for the $n=26$ setting we include another experimental design, which we term \emph{best}. Here, we calculate the observed covariate imbalance $B_x$ for each allocation vector $\w$. We then sort from the smallest $\abss{B_x}$ to the largest and enumerate the best $2R$ vectors (since the $\abss{B_x}$ is equal for $\w$ and $-\w$, the mirrored pairs appear in order after sorting). Additionally, since the subset of $R$ vectors is deterministic, we do not do 50 duplicates of this design during the simulation.

The main goal of this simulation is to compare power $\POW$ across the various design strategies, $R$ and $n$, computed via Equation~\ref{eq:def_power_metric} where the expectation over $B_z$ was approximated by averaging the 500 replicates over different $\z$ realizations and the expectation over all subsets $\allocspaceR \subset \allocspace_D$ was approximated by averaging over the 50 replicates of the different $\allocspaceR$ subsets. Using the law of total variance, we can compute standard errors of our simulation that incorporate these two sources of variation. 

We also collected other information during the simulation such as an estimate of mean $\rho$, a metric of how similar the allocation vectors are within specific experimental designs by $R$ and $n$ (as measured by the average $\abss{r_{ij}}$). This allows us to understand the interplay of $R$ and $\rho$ on power and assess our theoretical results in settings outside of their stylized assumptions. The power results are found in Figure~\ref{fig:average_power_sims} and the allocation dependence illustrations are found in Figure~\ref{fig:average_abs_rijs}.


\begin{figure}[h]
\centering
\includegraphics[width=6.4in]{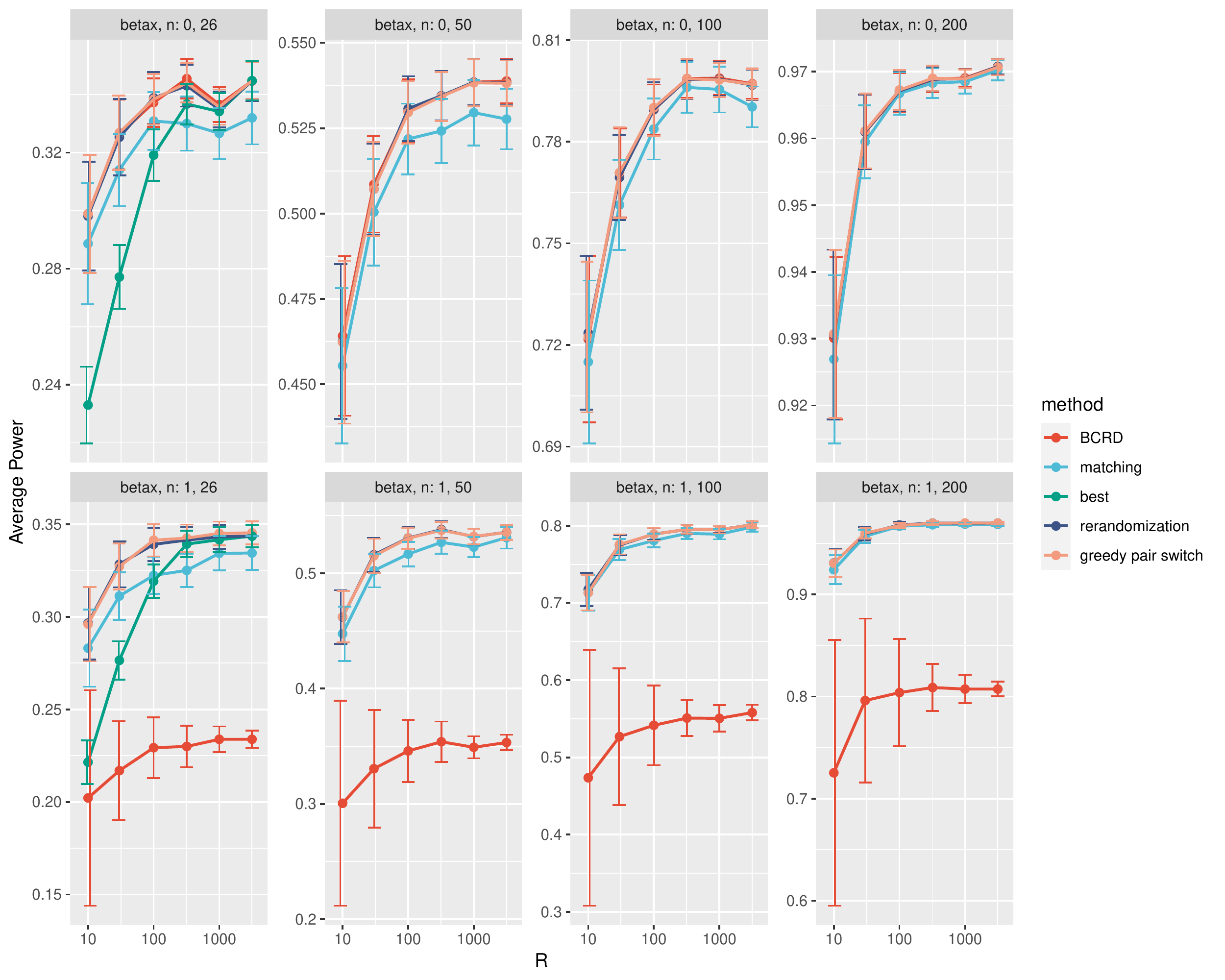}
\caption{Simulated estimates of power $\POW$ of the randomization test by number of allocation vectors in the design $R$ where $\beta = 0.25$, $\alpha = 0.05$ and $\beta_x \in \braces{0, 1}$. Individual plots correspond to different settings of the effect of the observed covariate $\beta_x$ and sample size $n$. Colors indicate the design strategy employed. Error bars are jittered slightly left-right and indicate 95\% confidence.}
\label{fig:average_power_sims}
\end{figure}

There are many observations from these plots. First, all eight illustrations of Figure~\ref{fig:average_power_sims} confirm our technical result that $\POW$ increases monotonically in $R$ (Theorem~\ref{thm:power_increases_monotonically_as_R_increases}).

Further, the bottom row illustrations of Figure~\ref{fig:average_power_sims} (where $\beta_x \neq 0$) demonstrate the effect of the design on the observed covariate imbalance $B_x$ affecting term (I) in $\betaThat$ (Equation~\ref{eq:betahat_diff}). BCRD has poor balancing performance and hence much lower power than the contenders (see Figure~\ref{fig:average_abs_Bxs} in the supplementary materials for observed covariate imbalance by design and $n$). Matching has worse power compared to rerandomization and greedy pair switching for small sample sizes but is no longer detectable at the illustration scale for $n = 200$. This is due to a combination of worse performance balancing the observed covariate (see Figure~\ref{fig:average_abs_Bxs}) and also higher allocation dependence (as apparent in Figure~\ref{fig:average_abs_rijs}), i.e. higher average $\abss{r_{ij}}$ with the latter consideration being more responsible (as evidenced by the comparison of the top row of Figure~\ref{fig:average_power_sims} corresponding to the setting of $\beta_x = 0$). The strategy of using the best vectors in an exhaustive search does poorly for low $R$ since the vectors are highly dependent as apparent from the leftmost plot of Figure~\ref{fig:average_abs_rijs}.

\begin{figure}[h]
\centering
\includegraphics[width=6.4in]{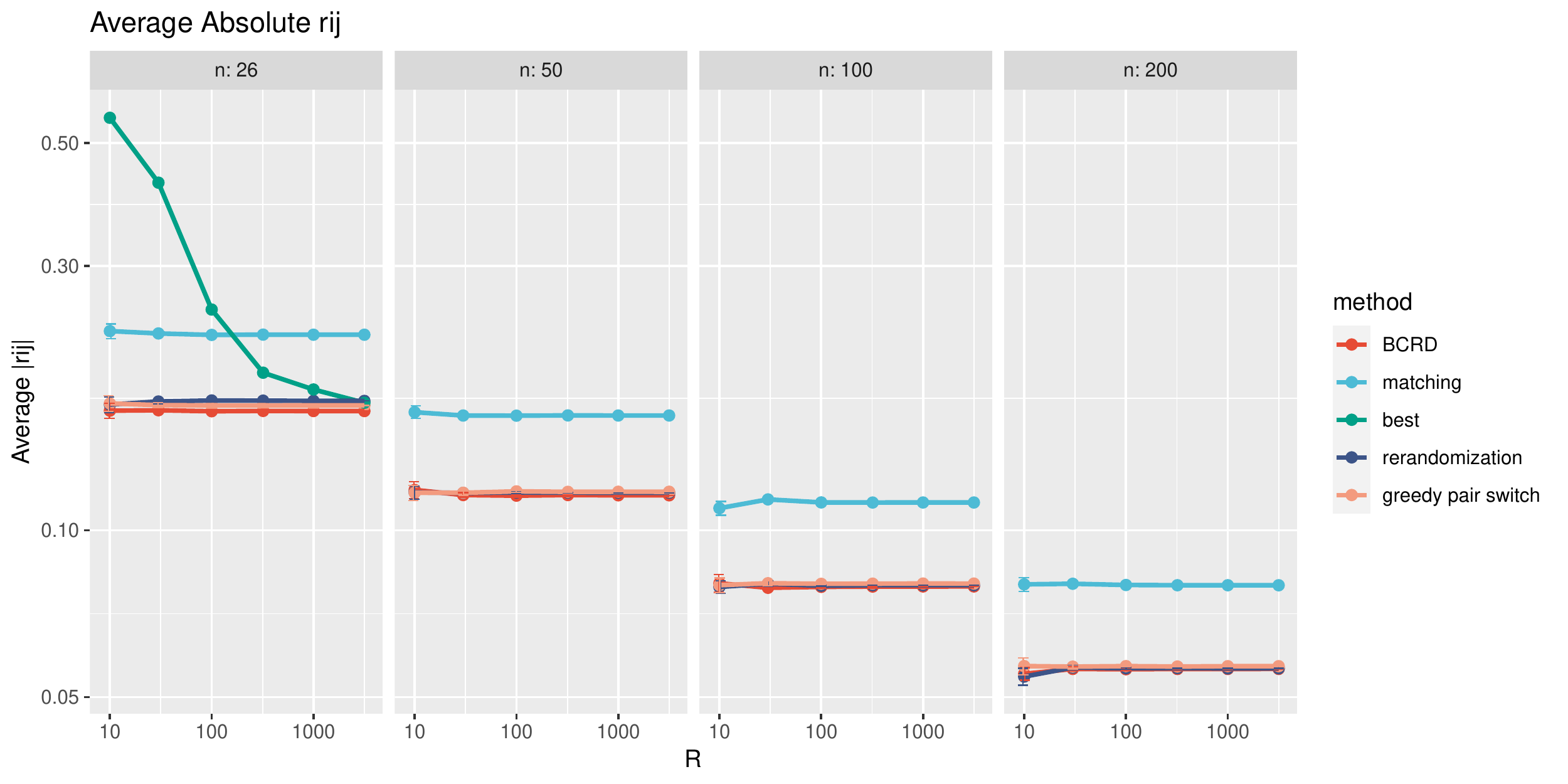}
\caption{Simulated average $\abss{r_{ij}}$ over all $2R$ allocation vectors on the log scale for each design method considered by subset size $R$. Individual plots correspond to different settings the sample size $n$. Allocation vector correlations are independent of $\betaT, \beta_x$ and $\alpha$. Colors indicate the design strategy employed. Error bars are jittered slightly left-right and indicate 95\% confidence but are usually smaller than the dot width. The advantage of pairwise matching is clearly illustrated as well as the equivalence of BCRD, rerandomization at the threshold considered and greedy pair switching.}
\label{fig:average_abs_rijs}
\end{figure}

The winning strategies in the realistic setting of $\beta_x \neq 0$ are rerandomization and greedy pair switching because both these designs can (a) drive $\abss{B_x}$ to nominal levels and (b) provide highly independent allocations.

We also verified that the tests are properly sized in all settings (see Figure~\ref{fig:size_by_method} in the supplementary materials). And we verified that the standard error of $\POW$ monotonically decreases in $R$ but not ultimately to zero. This can be seen from inspecting the length of the error bars of Figure~\ref{fig:average_power_sims} but can be seen more clearly in Figure~\ref{fig:ses_by_method} in the supplementary materials. Again, the winners here are rerandomization and greedy pair switching. The variability in BCRD in small $R$ in the case where $\beta_x \neq 0$ is the largest variability of the simulation. This is likely due to $B_x$ varying wildly across allocations.

\section{Discussion}\label{sec:discussion}

We investigated the power of the randomization test under different experimental designs in the setting of a simple response model with an additive treatment effect, an additive effect of an observed covariate and an additive effect of a fixed unobserved component. We first observe that if the sample size is large, e.g. greater than 200, using a specialized design different from BCRD does not provide significant power gains because both the observed covariate and unobserved response component will have low imbalance, a point noted by \citet{Cornfield1959}. 

But in many experimental settings, for example clinical trials, sample size is relatively small (e.g. $n < 100$) and the experimenter's design choice does indeed matter. Through our investigation of the power, we noticed that three main features of the experimental design (i.e. within the experimentalist's control) affect power: (1) The allocations' imbalance among the observed covariates, (2) the number of allocations in the design and (3) the orthogonality of the allocations.


The most important design consideration is (1) to make imbalance among the observed covariate in the two arms small enough so that it becomes inconsequential. This fact is not new; it is a well-studied problem with many heuristic designs including \citeauthor{Student1938}'s (\citeyear{Student1938}) classic rerandomization and recently \citet{Bertsimas2015,Kallus2018} who employ numerical optimization. 


For consideration (2), we have demonstrated that it is critical to have a large number of allocations $R$ in the design as $\POW$ will increase and variability of power due to the effect of the unobserved response component will decrease. An order of $R$ in the low 1000's seems to be sufficient. One must be careful that all these many $R$ allocations respect the small observed covarate imbalance restriction. This restriction is explicit in rerandomization where the minimum threshold imbalance is specified. One can prove that imbalance is very small in matching and greedy pair switching or alter these design strategies by requiring an explicit threshold. 

As for consideration (3), we conjecture that power will improve upon increasing the orthogonality among the allocations. Intuitively, (2) and (3) are at odds with another: as the number of vectors increases in a fixed space of dimension $n$, there will be pairs with larger and larger correlations. However, our simulations demonstrate that (2) is much more important than (3). The tradeoff is dependent on the constants $\sigsq_z$, the importance of the unobserved covariates to the response (i.e. conceptually the same as $R^2$ for the observed covariates) and $\beta$, the size of the experimental effect.

In our previous work \citep{Kapelner2020} we studied the mean squared error (MSE) of the same estimator in the same settings in an effort to understand how it is affected by the design $D$. We found (a) the expected MSE over $\z$ is optimized when using one allocation vector that minimized $\abss{B_x}$, (b) when considering the worst case MSE by $\z$, BCRD is the optimal strategy and (c) when examining a high quantile of MSE over $\z$, then a design that provides good imbalance while having orthogonal designs is preferred. The results about experimental power dovetail with our previous finding in (c).


Thus, based on this work and our previous work, we offer design recommendations for the practitioner that address these salient considerations. 

We recommend using rerandomization with a threshold as low as possible to produce $R$ on the order of a few thousand. The threshold can be lowered by having more time and computational resources at the practitioner's disposal. Rerandomization puts an upper limit on observed covariate imbalance and also provides allocations which seem (based on our numerical experience) to be as orthogonal as allocations that are expected in BCRD (we await a rigorous proof of this statement). 

We also recommend greedy pair switching \citep{Krieger2019} which provides smaller observed imbalance than rerandomization (by an order of $n$) and has orthogonality of its allocations proven to be nearly as orthogonal as BCRD in finite sample sizes.

A priori pairwise matching has observed covariate imbalance performance between these two designs but is not recommended because the orthogonality of its allocations is higher, lowering power in the small sample setting (see Figure~\ref{fig:average_power_sims}, columns 1 and 2). However, if the linear additive model (Equation~\ref{eq:simple_model}) is not believed, matching affords better performance on consideration (1) as it will match the response component of the observed covariate and not just the value of observed covariate \citep[see][Section 2.3.2]{Kallus2018}. This is recommended as lower allocation orthogonality does not penalize power by a significant amount.

Designs that enumerate all allocations vectors in $\allocspace_D$ and select $\w$ that provide optimally small $\abss{B_x}$ are possible in very small sample sizes (up to $n \approx 30$) and naively would seem to be the best strategy. We recommend strongly against these designs as they both have small $R$ and highly dependent allocations which can be deleterious to the experiment (see Figure~\ref{fig:average_power_sims}, column 1, green line).

\subsection{Further Research}\label{subsec:extensions}


In this paper, we considered one covariate but in more realistic settings, there would be many covariates. In such settings, this would increase the imbalance in the observed covariates and increase the importance of term (I) in Equation~\ref{eq:betahat_diff} but would not impact the other terms which are functions of the dependence structure of the allocation vectors. Our design recommendation remains the same: find allocations that drive down covariate imbalance but also retain the relative independence of the allocations. To reiterate, this can be accomplished by using a lower rerandomization threshold. Future work can investigate numerical optimization approaches that optimize observed imbalance \citep{Bertsimas2015, Kallus2018} while still providing many allocations with a high degree of orthogonality.

Also, we assumed the classic differences-in-mean estimator $\betaThat = \half(\YbarT - \YbarC)$. Alternatively, one could employ the OLS estimator which adjusts for the observed covariate(s). We showed in our previous work that the MSE of the estimator is an entire order of $n$ lower in observed covariate imbalance \citep[Equation 24]{Kapelner2020}. Thus, when the OLS estimator is employed, we anticipate term (I) to decrease but terms (II) and (III) to be unaffected. This will result in the number of vectors $R$ and the average absolute allocation dependence $\rho$ having a more pronounced impact on experimental power.


We also assumed a basic response model, Equation~\ref{eq:simple_model}, which is additive and linear in the observed covariate, unobserved covariate and treatment effect. If this model is incorrectly specified, this would be the same as the $\z$ term being a function of $\x$ and thus $\x$ would appear in the expression $B_z$ and its finite-sample central limit theorem. Examining designs in this setting would be interesting future work.

In the approximation scenario where $2R \ll R_D$ and a Monte Carlo, recommending $R$ to be large has been noted in the literature on permutation tests \citep[Section 3.2]{Hemerik2018}. An additional concern in the approximation scenario, our power expression given in Equation~\ref{eq:pow} implicitly assumes a $p$-value calculation of the ratio of number of elements beating the run estimate to the total number of allocations, known in the literature as the \qu{unbiased estimate}. \citet{Phipson2010} and many others caution practitioners of this approach as it is anti-conservative by a factor of about $1 / (2R)$ for $\alpha < 1/2$, an unintuitive flaw that compounds in severity during multiple testing. Their solution is to use the Wilson estimate, adding one to both the numerator and denominator in the ratio calculation, which is conservative if $R$ is small. However, since in our construction we include $m_{ii}$ in the quantile calculation, our test then becomes conservative \citep[Equation 15.8]{Lehmann2006}. However, since randomization tests are categorically different methods that do not rely on group structure \citep{Hemerik2019}, we are unsure how this literature applies to our findings. We leave exploration of these issues to further research although we do not believe a new estimator or dropping the identity permutation will change our results nor the thrust of our recommendations.

There is a rich literature in science where the applications require finding allocation vectors with small dependence similar to the problem we face in our application. For example, if a Hadamard matrix of order $n$ exists, then its rows yield $n$ orthogonal allocations for an experiment of size $n$. However, our focus is small experiments (e.g. $n < 100$) and thus this number of allocations falls far short of the desired $R > 1000$ (even before filtering for covariate balance). Other methods such as the \citet{Gold1967} codes and \citet{Kasami1966} codes that are used in telecommunication and GPS technology offer allocation sets which are larger, but not by the orders of magnitude we have shown are necessary. Broadly, when $R$ is large relative to $n$, attempts to eliminate correlation are fruitless in the following sense: For allocation vectors ${w_1,\ldots, w_R}$ in an experiment of size $n$, \citet[Theorem 2.2]{Datta2012} proves that

\beqn
\frac{1}{R(R-1)} \sum \limits_{i \neq j} |w_i^Tw_j|^2 \,\geq\, n\frac{R - n}{R-1}.
\eeqn

\noindent This implies that $R \gg n$ this general lower bound on the correlations approaches $n$. Noting that $n$ is achieved in expectation by the complete randomization design, we believe very little improvement is possible in the standard experiment model considered in this paper. But, practical experiments may have constraints on the set of feasible allocations which induce correlations in naive draws and the allocations chosen by the rerandomization methods discussed herein. In that case, optimizing designs jointly for orthogonality of allocations and covariate balance can yield gains, based on the results in Section (reference where we talk about rho). We leave this to future work.


Incidence and survival endpoints are also important and left for further research. Our intuition is the same considerations will be present and our design recommendation would be similar.

\bibliographystyle{apacite}
\bibliography{refs}

\begin{thebibliography}{}

\bibitem [\protect \citeauthoryear {%
Bertsimas%
, Johnson%
\BCBL {}\ \BBA {} Kallus%
}{%
Bertsimas%
\ \protect \BOthers {.}}{%
{\protect \APACyear {2015}}%
}]{%
Bertsimas2015}
\APACinsertmetastar {%
Bertsimas2015}%
\begin{APACrefauthors}%
Bertsimas, D.%
, Johnson, M.%
\BCBL {}\ \BBA {} Kallus, N.%
\end{APACrefauthors}%
\unskip\
\newblock
\APACrefYearMonthDay{2015}{}{}.
\newblock
{\BBOQ}\APACrefatitle {The power of optimization over randomization in
  designing experiments involving small samples} {The power of optimization
  over randomization in designing experiments involving small samples}.{\BBCQ}
\newblock
\APACjournalVolNumPages{Operations Research}{63}{4}{868--876}.
\PrintBackRefs{\CurrentBib}

\bibitem [\protect \citeauthoryear {%
Cornfield%
}{%
Cornfield%
}{%
{\protect \APACyear {1959}}%
}]{%
Cornfield1959}
\APACinsertmetastar {%
Cornfield1959}%
\begin{APACrefauthors}%
Cornfield, J.%
\end{APACrefauthors}%
\unskip\
\newblock
\APACrefYearMonthDay{1959}{}{}.
\newblock
{\BBOQ}\APACrefatitle {Principles of research} {Principles of research}.{\BBCQ}
\newblock
\APACjournalVolNumPages{American journal of mental deficiency}{64}{}{240--252}.
\PrintBackRefs{\CurrentBib}

\bibitem [\protect \citeauthoryear {%
Datta%
, Howard%
\BCBL {}\ \BBA {} Cochran%
}{%
Datta%
\ \protect \BOthers {.}}{%
{\protect \APACyear {2012}}%
}]{%
Datta2012}
\APACinsertmetastar {%
Datta2012}%
\begin{APACrefauthors}%
Datta, S.%
, Howard, S.%
\BCBL {}\ \BBA {} Cochran, D.%
\end{APACrefauthors}%
\unskip\
\newblock
\APACrefYearMonthDay{2012}{}{}.
\newblock
{\BBOQ}\APACrefatitle {Geometry of the Welch bounds} {Geometry of the welch
  bounds}.{\BBCQ}
\newblock
\APACjournalVolNumPages{Linear algebra and its
  applications}{437}{10}{2455--2470}.
\PrintBackRefs{\CurrentBib}

\bibitem [\protect \citeauthoryear {%
Dwass%
}{%
Dwass%
}{%
{\protect \APACyear {1957}}%
}]{%
Dwass1957}
\APACinsertmetastar {%
Dwass1957}%
\begin{APACrefauthors}%
Dwass, M.%
\end{APACrefauthors}%
\unskip\
\newblock
\APACrefYearMonthDay{1957}{}{}.
\newblock
{\BBOQ}\APACrefatitle {Modified randomization tests for nonparametric
  hypotheses} {Modified randomization tests for nonparametric
  hypotheses}.{\BBCQ}
\newblock
\APACjournalVolNumPages{The Annals of Mathematical Statistics}{}{}{181--187}.
\PrintBackRefs{\CurrentBib}

\bibitem [\protect \citeauthoryear {%
Fisher%
}{%
Fisher%
}{%
{\protect \APACyear {1925}}%
}]{%
Fisher1925}
\APACinsertmetastar {%
Fisher1925}%
\begin{APACrefauthors}%
Fisher, R\BPBI A.%
\end{APACrefauthors}%
\unskip\
\newblock
\APACrefYear{1925}.
\newblock
\APACrefbtitle {Statistical methods for research workers} {Statistical methods
  for research workers}.
\newblock
\APACaddressPublisher{}{Edinburgh Oliver \& Boyd}.
\PrintBackRefs{\CurrentBib}

\bibitem [\protect \citeauthoryear {%
Gold%
}{%
Gold%
}{%
{\protect \APACyear {1967}}%
}]{%
Gold1967}
\APACinsertmetastar {%
Gold1967}%
\begin{APACrefauthors}%
Gold, R.%
\end{APACrefauthors}%
\unskip\
\newblock
\APACrefYearMonthDay{1967}{}{}.
\newblock
{\BBOQ}\APACrefatitle {Optimal binary sequences for spread spectrum
  multiplexing (Corresp)} {Optimal binary sequences for spread spectrum
  multiplexing (corresp)}.{\BBCQ}
\newblock
\APACjournalVolNumPages{IEEE Transactions on information
  theory}{13}{4}{619--621}.
\PrintBackRefs{\CurrentBib}

\bibitem [\protect \citeauthoryear {%
H{\'a}jek%
}{%
H{\'a}jek%
}{%
{\protect \APACyear {1961}}%
}]{%
Hajek1961}
\APACinsertmetastar {%
Hajek1961}%
\begin{APACrefauthors}%
H{\'a}jek, J.%
\end{APACrefauthors}%
\unskip\
\newblock
\APACrefYearMonthDay{1961}{}{}.
\newblock
{\BBOQ}\APACrefatitle {Some extensions of the {W}ald-{W}olfowitz-{N}oether
  theorem} {Some extensions of the {W}ald-{W}olfowitz-{N}oether
  theorem}.{\BBCQ}
\newblock
\APACjournalVolNumPages{The Annals of Mathematical Statistics}{}{}{506--523}.
\PrintBackRefs{\CurrentBib}

\bibitem [\protect \citeauthoryear {%
Hemerik%
\ \BBA {} Goeman%
}{%
Hemerik%
\ \BBA {} Goeman%
}{%
{\protect \APACyear {2018}}%
}]{%
Hemerik2018}
\APACinsertmetastar {%
Hemerik2018}%
\begin{APACrefauthors}%
Hemerik, J.%
\BCBT {}\ \BBA {} Goeman, J.%
\end{APACrefauthors}%
\unskip\
\newblock
\APACrefYearMonthDay{2018}{}{}.
\newblock
{\BBOQ}\APACrefatitle {Exact testing with random permutations} {Exact testing
  with random permutations}.{\BBCQ}
\newblock
\APACjournalVolNumPages{Test}{27}{4}{811--825}.
\PrintBackRefs{\CurrentBib}

\bibitem [\protect \citeauthoryear {%
Hemerik%
\ \BBA {} Goeman%
}{%
Hemerik%
\ \BBA {} Goeman%
}{%
{\protect \APACyear {2019}}%
}]{%
Hemerik2019}
\APACinsertmetastar {%
Hemerik2019}%
\begin{APACrefauthors}%
Hemerik, J.%
\BCBT {}\ \BBA {} Goeman, J\BPBI J.%
\end{APACrefauthors}%
\unskip\
\newblock
\APACrefYearMonthDay{2019}{}{}.
\newblock
{\BBOQ}\APACrefatitle {Another look at the Lady Tasting Tea and
  permutation-based randomization tests} {Another look at the lady tasting tea
  and permutation-based randomization tests}.{\BBCQ}
\newblock
\APACjournalVolNumPages{arXiv preprint arXiv:1912.02633}{}{}{}.
\PrintBackRefs{\CurrentBib}

\bibitem [\protect \citeauthoryear {%
Kallus%
}{%
Kallus%
}{%
{\protect \APACyear {2018}}%
}]{%
Kallus2018}
\APACinsertmetastar {%
Kallus2018}%
\begin{APACrefauthors}%
Kallus, N.%
\end{APACrefauthors}%
\unskip\
\newblock
\APACrefYearMonthDay{2018}{}{}.
\newblock
{\BBOQ}\APACrefatitle {Optimal a priori balance in the design of controlled
  experiments} {Optimal a priori balance in the design of controlled
  experiments}.{\BBCQ}
\newblock
\APACjournalVolNumPages{Journal of the Royal Statistical Society: Series B
  (Statistical Methodology)}{80}{1}{85--112}.
\PrintBackRefs{\CurrentBib}

\bibitem [\protect \citeauthoryear {%
Kapelner%
, Krieger%
, Sklar%
, Shalit%
\BCBL {}\ \BBA {} Azriel%
}{%
Kapelner%
\ \protect \BOthers {.}}{%
{\protect \APACyear {2020}}%
}]{%
Kapelner2020}
\APACinsertmetastar {%
Kapelner2020}%
\begin{APACrefauthors}%
Kapelner, A.%
, Krieger, A\BPBI M.%
, Sklar, M.%
, Shalit, U.%
\BCBL {}\ \BBA {} Azriel, D.%
\end{APACrefauthors}%
\unskip\
\newblock
\APACrefYearMonthDay{2020}{}{}.
\newblock
{\BBOQ}\APACrefatitle {Harmonizing Optimized Designs with Classic Randomization
  in Experiments} {Harmonizing optimized designs with classic randomization in
  experiments}.{\BBCQ}
\newblock
\APACjournalVolNumPages{The American Statistician}{}{}{1--12}.
\PrintBackRefs{\CurrentBib}

\bibitem [\protect \citeauthoryear {%
Kasami%
}{%
Kasami%
}{%
{\protect \APACyear {1966}}%
}]{%
Kasami1966}
\APACinsertmetastar {%
Kasami1966}%
\begin{APACrefauthors}%
Kasami, T.%
\end{APACrefauthors}%
\unskip\
\newblock
\APACrefYearMonthDay{1966}{}{}.
\newblock
\APACrefbtitle {Weight distribution formula for some class of cyclic codes}
  {Weight distribution formula for some class of cyclic codes}\
  \APACbVolEdTR{}{\BTR{}}.
\newblock
\APACaddressInstitution{}{Coordinated Science Laboratory, University of
  Illinois at Urbana-Champaign}.
\newblock
\APACrefnote{Coordinated Science Laboratory Report no. R-285}
\PrintBackRefs{\CurrentBib}

\bibitem [\protect \citeauthoryear {%
Krieger%
, Azriel%
\BCBL {}\ \BBA {} Kapelner%
}{%
Krieger%
\ \protect \BOthers {.}}{%
{\protect \APACyear {2019}}%
}]{%
Krieger2019}
\APACinsertmetastar {%
Krieger2019}%
\begin{APACrefauthors}%
Krieger, A\BPBI M.%
, Azriel, D.%
\BCBL {}\ \BBA {} Kapelner, A.%
\end{APACrefauthors}%
\unskip\
\newblock
\APACrefYearMonthDay{2019}{}{}.
\newblock
{\BBOQ}\APACrefatitle {Nearly random designs with greatly improved balance}
  {Nearly random designs with greatly improved balance}.{\BBCQ}
\newblock
\APACjournalVolNumPages{Biometrika}{106}{3}{695--701}.
\PrintBackRefs{\CurrentBib}

\bibitem [\protect \citeauthoryear {%
Lehmann%
\ \BBA {} Romano%
}{%
Lehmann%
\ \BBA {} Romano%
}{%
{\protect \APACyear {2006}}%
}]{%
Lehmann2006}
\APACinsertmetastar {%
Lehmann2006}%
\begin{APACrefauthors}%
Lehmann, E\BPBI L.%
\BCBT {}\ \BBA {} Romano, J\BPBI P.%
\end{APACrefauthors}%
\unskip\
\newblock
\APACrefYear{2006}.
\newblock
\APACrefbtitle {Testing statistical hypotheses} {Testing statistical
  hypotheses}\ (\PrintOrdinal{third}\ \BEd).
\newblock
\APACaddressPublisher{}{Springer Science \& Business Media}.
\PrintBackRefs{\CurrentBib}

\bibitem [\protect \citeauthoryear {%
Li%
\ \BBA {} Ding%
}{%
Li%
\ \BBA {} Ding%
}{%
{\protect \APACyear {2017}}%
}]{%
Li2017}
\APACinsertmetastar {%
Li2017}%
\begin{APACrefauthors}%
Li, X.%
\BCBT {}\ \BBA {} Ding, P.%
\end{APACrefauthors}%
\unskip\
\newblock
\APACrefYearMonthDay{2017}{}{}.
\newblock
{\BBOQ}\APACrefatitle {General forms of finite population central limit
  theorems with applications to causal inference} {General forms of finite
  population central limit theorems with applications to causal
  inference}.{\BBCQ}
\newblock
\APACjournalVolNumPages{Journal of the American Statistical
  Association}{112}{520}{1759--1769}.
\PrintBackRefs{\CurrentBib}

\bibitem [\protect \citeauthoryear {%
Lin%
}{%
Lin%
}{%
{\protect \APACyear {2013}}%
}]{%
Lin2013}
\APACinsertmetastar {%
Lin2013}%
\begin{APACrefauthors}%
Lin, W.%
\end{APACrefauthors}%
\unskip\
\newblock
\APACrefYearMonthDay{2013}{}{}.
\newblock
{\BBOQ}\APACrefatitle {Agnostic notes on regression adjustments to experimental
  data: Reexamining Freedman's critique} {Agnostic notes on regression
  adjustments to experimental data: Reexamining freedman's critique}.{\BBCQ}
\newblock
\APACjournalVolNumPages{The Annals of Applied Statistics}{7}{1}{295--318}.
\PrintBackRefs{\CurrentBib}

\bibitem [\protect \citeauthoryear {%
Phipson%
\ \BBA {} Smyth%
}{%
Phipson%
\ \BBA {} Smyth%
}{%
{\protect \APACyear {2010}}%
}]{%
Phipson2010}
\APACinsertmetastar {%
Phipson2010}%
\begin{APACrefauthors}%
Phipson, B.%
\BCBT {}\ \BBA {} Smyth, G\BPBI K.%
\end{APACrefauthors}%
\unskip\
\newblock
\APACrefYearMonthDay{2010}{}{}.
\newblock
{\BBOQ}\APACrefatitle {Permutation P-values should never be zero: calculating
  exact P-values when permutations are randomly drawn} {Permutation p-values
  should never be zero: calculating exact p-values when permutations are
  randomly drawn}.{\BBCQ}
\newblock
\APACjournalVolNumPages{Statistical applications in genetics and molecular
  biology}{9}{1}{}.
\PrintBackRefs{\CurrentBib}

\bibitem [\protect \citeauthoryear {%
Rosenberger%
\ \BBA {} Lachin%
}{%
Rosenberger%
\ \BBA {} Lachin%
}{%
{\protect \APACyear {2016}}%
}]{%
Rosenberger2016}
\APACinsertmetastar {%
Rosenberger2016}%
\begin{APACrefauthors}%
Rosenberger, W\BPBI F.%
\BCBT {}\ \BBA {} Lachin, J\BPBI M.%
\end{APACrefauthors}%
\unskip\
\newblock
\APACrefYear{2016}.
\newblock
\APACrefbtitle {Randomization in clinical trials: theory and practice}
  {Randomization in clinical trials: theory and practice}\
  (\PrintOrdinal{Second}\ \BEd).
\newblock
\APACaddressPublisher{}{John Wiley \& Sons}.
\PrintBackRefs{\CurrentBib}

\bibitem [\protect \citeauthoryear {%
Senn%
}{%
Senn%
}{%
{\protect \APACyear {2013}}%
}]{%
Senn2013}
\APACinsertmetastar {%
Senn2013}%
\begin{APACrefauthors}%
Senn, S.%
\end{APACrefauthors}%
\unskip\
\newblock
\APACrefYearMonthDay{2013}{}{}.
\newblock
{\BBOQ}\APACrefatitle {Seven myths of randomisation in clinical trials} {Seven
  myths of randomisation in clinical trials}.{\BBCQ}
\newblock
\APACjournalVolNumPages{Statistics in Medicine}{32}{9}{1439--1450}.
\PrintBackRefs{\CurrentBib}

\bibitem [\protect \citeauthoryear {%
Student%
}{%
Student%
}{%
{\protect \APACyear {1938}}%
}]{%
Student1938}
\APACinsertmetastar {%
Student1938}%
\begin{APACrefauthors}%
Student.%
\end{APACrefauthors}%
\unskip\
\newblock
\APACrefYearMonthDay{1938}{}{}.
\newblock
{\BBOQ}\APACrefatitle {Comparison between balanced and random arrangements of
  field plots} {Comparison between balanced and random arrangements of field
  plots}.{\BBCQ}
\newblock
\APACjournalVolNumPages{Biometrika}{}{}{363--378}.
\PrintBackRefs{\CurrentBib}

\end{thebibliography}

\pagebreak

\appendix

\section{Technical Results and Proofs for \qu{\ourtitle}}\label{app}
\renewcommand{\theequation}{A\arabic{equation}}
\renewcommand{\thefigure}{A\arabic{figure}}

The following is a table of notation we use throughout the paper (above the horizontal line) and in the appendix (below the horizontal line).

\begin{table}[h]
\centering
\begin{tabular}{cl}
\hline
Expression & Description \\ \hline
$n$ & The sample size \\
$\alpha$ & The Type I error of the randomization test assumed $< 1/2$ \\
$\beta$ & The additive treatment effect in the response \\
$\w, -\w$ & An allocation vector of length $n$ and its mirror \\
$r_{ij} := \w_i^\top \w_j / n$ & The correlation of two allocations \\
$B_{x,i} := \w_i^\top \x / n$ & The imbalance in an observed covariate \\
$~B_{z,i} := \w_i^\top \z / n$ & The imbalance in an unobserved response component \\
$\allocspace_D$ & The set of allocation vectors in the strategy $D$. \\
$\allocspaceR$ & A subset of $R$ unmirrored vectors from $\allocspace_D$ with their $R$ mirrors \\
$R$ & The number of unmirrored allocation vectors in the strategy $\allocspaceR$ \\
$\phi(t;\mu,\sigsq)$ & The PDF of the normal distribution with mean $\mu$ and variance $\sigsq$\\
$\phi(t) := \phi(t;0,1)$ & The PDF of the standard normal distribution \\
$\Phi(t)$ & The CDF of the standard normal distribution \\
$F_B(x, n, p)$ & The CDF of the binomial distribution \\
$\rho$ & Average absolute correlation between the allocations \\ \hline
$\gamma = \sqrt{n} \beta$ & The sample size-scaled treatment effect \\
$Z_i$ & $\iid$ standard normal random variables where $i = 0,1,2,\ldots$ \\
$~B_{Z_i} := \sqrt{n} B_{z,i}$ & The standardized imbalance in an unobserved response component \\
$q = \floor{2\alpha R}-1$ & The number of vectors to beat to reject the null hypothesis \\
\hline
\end{tabular}
\end{table}

\begin{theorem}\label{thm:toy_case}
For $R=2$, $\abss{r_{12}} = \rho$ and $\alpha = 1/4$, $\POW$ is monotonically decreasing in $\rho$.
\end{theorem}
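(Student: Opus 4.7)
The plan is to reduce $\POW$ to a tractable closed form and then establish the monotonicity directly. With $R=2$ and $\alpha=1/4$, the quantile criterion in Equation~\ref{eq:prob_indicator} forces the run estimator to be the \emph{strict} maximum of the four values $\{V_{11},-V_{11},V_{12},-V_{12}\}$, which (with probability one) gives $\{I_1=1\} = \{V_{11} > |V_{12}|\}$; by mirror symmetry $\{I_1^m=1\} = \{-V_{11}>|V_{12}|\}$. These events are disjoint, so Equation~\ref{eq:def_power_metric} reduces to
\[
\POW \;=\; \tfrac{1}{2}\,\prob{|V_{11}| > |V_{12}|}.
\]

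The key structural observation is that $V_{11}$ and $V_{12}$ are jointly Gaussian with equal unit variances, so $A := V_{11}-V_{12}$ and $B := V_{11}+V_{12}$ satisfy $\cov{A}{B} = \var{V_{11}} - \var{V_{12}} = 0$ and are therefore \emph{independent} normals: $A \sim \normnot{(1-\rho)\gamma}{\,2(1-\rho)}$ and $B \sim \normnot{(1+\rho)\gamma}{\,2(1+\rho)}$. Since $V_{11}^2 - V_{12}^2 = AB$, the event $\{|V_{11}|>|V_{12}|\}$ equals $\{AB>0\}$, which factors by independence. Setting $a := \gamma\sqrt{(1-\rho)/2}$ and $b := \gamma\sqrt{(1+\rho)/2}$, this yields the closed form
\[
\POW(\rho) \;=\; \Phi(a)\Phi(b) - \tfrac{1}{2}\bigl[\Phi(a)+\Phi(b)\bigr] + \tfrac{1}{2} \;=\; \bigl(\Phi(a)-\tfrac{1}{2}\bigr)\bigl(\Phi(b)-\tfrac{1}{2}\bigr) + \tfrac{1}{4}.
\]

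For $\gamma = 0$ the right-hand side is identically $1/4$ and the claim is trivial, so assume $\gamma > 0$. Differentiating with $a' = -\gamma\bigl(2\sqrt{2(1-\rho)}\bigr)^{-1}$ and $b' = +\gamma\bigl(2\sqrt{2(1+\rho)}\bigr)^{-1}$, and using $\sqrt{1-\rho}=a\sqrt{2}/\gamma$, $\sqrt{1+\rho}=b\sqrt{2}/\gamma$, the condition $d\POW/d\rho < 0$ collapses cleanly to
\[
\frac{a\,[\Phi(a)-\tfrac{1}{2}]}{\phi(a)} \;<\; \frac{b\,[\Phi(b)-\tfrac{1}{2}]}{\phi(b)}.
\]
Since $0 < a < b$ whenever $\rho\in(0,1)$, it suffices to prove that $f(x):= x\,[\Phi(x)-\tfrac{1}{2}]\,e^{x^2/2}$ is strictly increasing on $(0,\infty)$. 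A one-line computation gives $f'(x)\,e^{-x^2/2} = [\Phi(x)-\tfrac{1}{2}](1+x^2) + x\phi(x)$, which is a sum of strictly positive terms for $x>0$, finishing the proof.

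The main obstacle is essentially bookkeeping rather than any deep idea: the decisive observation is the equal-variance trick that makes $A$ and $B$ independent, after which the two competing effects of $\rho$ noted in the surrounding discussion (variance reduction versus signal shrinkage) are quantified by the factors $\phi(\cdot)$ and $[\Phi(\cdot)-\tfrac12]$, and the whole derivative condition collapses to an elementary scalar monotonicity of the explicit function $f$.
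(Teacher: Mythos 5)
Your reduction of $\{I_1=1\}$ to $\{V_{11}>|V_{12}|\}$ is correct, and the equal-variance trick making $A=V_{11}-V_{12}$ and $B=V_{11}+V_{12}$ independent is a genuinely nice observation. But the identification $\{I_1^m=1\}=\{-V_{11}>|V_{12}|\}$ is wrong, and the error propagates into your closed form. The indicator $I_1^m$ refers to the experiment actually being \emph{run} with $\wexp=-\w_1$: the response is regenerated as $\y=\beta(-\w_1)+\z$, so the observed statistic is the diagonal entry $m_{22}=(-\w_1)^\top\y/n=\beta-B_{z,1}$ --- the treatment effect always enters the run statistic with a $+$ sign because $r_{ii}=1$. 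The null row it is compared against is $\{-\gamma+B_{Z_1},\ \gamma-B_{Z_1},\ -\rho\gamma+B_{Z_2},\ \rho\gamma-B_{Z_2}\}$, so $\{I_1^m=1\}=\{\gamma-B_{Z_1}>|B_{Z_2}-\rho\gamma|\}$, which by the sign-symmetry of the centered Gaussian pair $(B_{Z_1},B_{Z_2})$ has the \emph{same} probability as $\{I_1=1\}$. Hence $\POW=\prob{V_{11}>|V_{12}|}$, not $\tfrac12\prob{|V_{11}|>|V_{12}|}$. The quantity you analyze is a different (and smaller) function: note that as $\gamma\to\infty$ your formula tends to $1/2$, whereas the power must tend to $1$; at $\rho=0,\gamma=1$ your formula gives about $0.32$ versus the true value $\Phi(1/\sqrt2)^2\approx0.58$. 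The event $\{-V_{11}>|V_{12}|\}$ you substitute is the event that the \emph{row-1 null entry} for $-\w_1$ beats the row-1 quantile, which is not a rejection event for any experiment.

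The good news is that your machinery repairs the proof immediately and yields something cleaner than the paper's argument. Since $\{V_{11}>|V_{12}|\}=\{A>0\}\cap\{B>0\}$ and $A\perp B$, the correct power is simply $\POW=\Phi(a)\Phi(b)$ with $a=\gamma\sqrt{(1-\rho)/2}$, $b=\gamma\sqrt{(1+\rho)/2}$; differentiating $\ln\Phi(a)+\ln\Phi(b)$ in $\rho$ reduces monotonicity to the fact that $x\mapsto \phi(x)/\bigl(x\,\Phi(x)\bigr)$ is strictly decreasing on $(0,\infty)$ (decreasing numerator over increasing denominator), since $0<a<b$. This is substantially shorter than the paper's route, which computes the same probability as a bivariate-normal integral $G(\rho)$, reparameterizes via $\omega=\sqrt{(1+\rho)/(1-\rho)}$, and differentiates under the integral sign using the explicit evaluation in Lemma~\ref{lemm:toy_case}. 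Your final lemma that $x\bigl[\Phi(x)-\tfrac12\bigr]e^{x^2/2}$ is increasing is correct but ends up being applied to the wrong functional; replace it with the monotonicity of $x\,\Phi(x)e^{x^2/2}$ and the argument goes through.
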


\begin{proof}
By Assumption~\ref{ass:Bznorm} and Equation~\ref{eq:corrBziBzj}, the imbalances of the unobserved covariate among the two designs has a bivariate normal distribution,

\bneqn\label{eq:imbalance_z_mvnorm}
\twovec{B_{Z_1}}{B_{Z_2}} \sim \multnormnot{2}{\zerovec}{\twobytwomat{1}{\rho}{\rho}{1}}.
\eneqn

\noindent The estimate for the treatment effect for the design that generated the data can be written as

\beqn
B_{Z_1}=Z_1+\gamma,
\eeqn

\noindent The estimate based on its mirror image is then $-B_{Z_1}$. Similary the estimate for the other choice of assignment is

\beqn
B_{Z_2}=Z_2+\rho \gamma,
\eeqn

\noindent and its mirror image produces an estimate of $-B_{Z_2}$ for the treatment effect.

It is then algebra to find the set of $Z_1,Z_2$  for which simultaneously,  $B_{Z_1} > -B_{Z_1}$, $B_{Z_1}>B_{Z_2}$ and $B_{Z_1}>-B_{Z_2}$. This results in $Z_1>-\gamma$ and

\beqn
-Z_1-(1+\rho)\gamma<Z_2<Z_1+(1-\rho)\gamma.
\eeqn

\noindent Hence the objective we want to consider as a function of $\rho$, $G(\rho)$, which denotes the probability that $B_{Z_1}$ provides the largest estimate among the four choices, is

\beqn
G(\rho)=\int_{-\gamma}^\infty \parens{ \Phi\Big(\frac{z_1+(1-\rho)\gamma - \rho z_1}{\sqrt{1-\rho^2}}\Big)-\Phi\Big(\frac{-z_1-(1+\rho)\gamma-\rho z_1}{\sqrt{1-\rho^2}}\Big) } \phi(z_1)dz_1,
\eeqn

\noindent as $Z_2\,|\,Z_1$ has a normal distribution with mean $\rho z_1$ and varinace $1-\rho^2$. If we reparameterize $\rho$ to $\omega=\sqrt{\frac{1+\rho}{1-\rho}}$, and $u=z_1+\gamma$, 

\[
G(\omega)=\int_{0}^\infty \Big( \Phi(u/\omega)-\Phi(-u\omega)\Big)\phi(u-\gamma)du = \int_{0}^\infty \Big( \Phi(u/\omega)+\Phi(u\omega)\Big)\phi(u-\gamma)du -\Phi(\gamma) .
\]

Now $\omega$ indexes the agreement between the two designs. If $\rho=-1$, then the designs are mirror image and $\omega=0$. If $\rho=0$, which corresponds to half of the treated in one design are assigned to treatment in the other design and similarly for the control, likely to be the case that maximizes $G(\omega)$, then $\omega=1$ by Corollary~\ref{cor:g_omega_max}. Finally, as the two designs become more similar, $\rho$ is positive and goes to one and $\omega$ corresponding to $\omega>1$ and increasing to infinity. 

One can take the derivative and explore $G(\omega)$'s behavior. It follows that the derivative of $G(\omega)$, $g(\omega)$, is

\begin{equation}
\label{derg}
g(\omega)=\int_{0}^\infty u \parens{\phi(u \omega)-\frac{\phi(u/\omega)}{\omega ^2}} \phi(u-\gamma)du.
\end{equation}

In order to evaluate Equation~\ref{derg}, consider

\begin{equation}
\label{gh}
h(a)=a\int_{0}^\infty u\phi(au)\phi(u-\gamma)du.
\end{equation}

\noindent which yields the desired result using Lemma~\ref{lemm:toy_case}.

\end{proof}

\begin{lemma}\label{lemm:toy_case}

\bneqn\label{ghl}
\int_{0}^\infty u\phi(au)\phi(u-\gamma)du = \frac{e^{-\overtwo{\gamma^2}}}{2\pi(1+a^2)}+\frac{\gamma e^{-\frac{1}{2}\frac{a^2\gamma^2}{1+a^2}}}{\sqrt{2\pi} (1+a^2)^{3/2}}\parens{1-\Phi\parens{-\frac{\gamma}{\sqrt{1+a^2}}}}.
\eneqn

\end{lemma}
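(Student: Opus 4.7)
The proof is a direct calculation via completing the square. Here is my plan.

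\textbf{Step 1: Combine exponents and complete the square.} I would begin by writing the integrand explicitly as
\beqn
u\phi(au)\phi(u-\gamma) \;=\; \frac{u}{2\pi}\exp\parens{-\tfrac{1}{2}\bracks{a^2 u^2 + (u-\gamma)^2}},
\eeqn
and gather the quadratic in the exponent, namely $(1+a^2)u^2 - 2\gamma u + \gamma^2$. Completing the square in $u$ gives
\beqn
(1+a^2)\parens{u - \tfrac{\gamma}{1+a^2}}^{\!2} \;+\; \tfrac{a^2\gamma^2}{1+a^2}.
\eeqn
Let $\mu := \gamma/(1+a^2)$ and $\sigma^2 := 1/(1+a^2)$. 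Then the integrand factors as a constant $\frac{1}{2\pi}\exp(-\tfrac{1}{2}\tfrac{a^2\gamma^2}{1+a^2})$ times $u\exp\parens{-\tfrac{1}{2\sigma^2}(u-\mu)^2}$.

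\textbf{Step 2: Shift and split.} Substituting $v = u - \mu$ converts the remaining integral to
\beqn
\int_{-\mu}^{\infty}(v+\mu)\,e^{-(1+a^2)v^2/2}\,dv \;=\; \underbrace{\int_{-\mu}^{\infty} v\, e^{-(1+a^2)v^2/2}\,dv}_{(\mathrm{A})} \;+\; \mu\underbrace{\int_{-\mu}^{\infty} e^{-(1+a^2)v^2/2}\,dv}_{(\mathrm{B})}.
\eeqn
Integral (A) is elementary: a direct antiderivative gives $(1+a^2)^{-1}\exp(-(1+a^2)\mu^2/2) = (1+a^2)^{-1}\exp(-\gamma^2/(2(1+a^2)))$. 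For integral (B) I would substitute $w = v\sqrt{1+a^2}$ so that
\beqn
(\mathrm{B}) \;=\; \frac{1}{\sqrt{1+a^2}}\int_{-\gamma/\sqrt{1+a^2}}^{\infty} e^{-w^2/2}\,dw \;=\; \frac{\sqrt{2\pi}}{\sqrt{1+a^2}}\parens{1-\Phi\parens{-\tfrac{\gamma}{\sqrt{1+a^2}}}},
\eeqn
using $\mu\sqrt{1+a^2} = \gamma/\sqrt{1+a^2}$.

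\textbf{Step 3: Assemble and simplify.} Reattaching the prefactor $\frac{1}{2\pi}\exp(-\tfrac{a^2\gamma^2}{2(1+a^2)})$, the contribution of (A) becomes
\beqn
\frac{1}{2\pi(1+a^2)}\exp\parens{-\tfrac{a^2\gamma^2 + \gamma^2}{2(1+a^2)}} \;=\; \frac{e^{-\gamma^2/2}}{2\pi(1+a^2)},
\eeqn
since $a^2\gamma^2 + \gamma^2 = \gamma^2(1+a^2)$. The contribution of $\mu\cdot(\mathrm{B})$ is $\mu/\sqrt{1+a^2} = \gamma/(1+a^2)^{3/2}$ times the Gaussian CDF factor and the prefactor, yielding
\beqn
\frac{\gamma\,e^{-a^2\gamma^2/(2(1+a^2))}}{\sqrt{2\pi}\,(1+a^2)^{3/2}}\parens{1-\Phi\parens{-\tfrac{\gamma}{\sqrt{1+a^2}}}}.
\eeqn
Summing the two contributions produces exactly the claimed right-hand side.

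\textbf{Obstacles.} There is no conceptual difficulty; the only risk is bookkeeping of the two exponential prefactors (one absorbing into $e^{-\gamma^2/2}$ via the identity $a^2 + 1 = 1+a^2$, the other surviving as $e^{-a^2\gamma^2/(2(1+a^2))}$) and the careful identification $\mu\sqrt{1+a^2} = \gamma/\sqrt{1+a^2}$ inside the $\Phi$-argument. Writing $(1+a^2)^{-3/2}$ arises from combining one factor of $(1+a^2)^{-1}$ from $\mu$ with one factor of $(1+a^2)^{-1/2}$ from the $w$-substitution Jacobian.
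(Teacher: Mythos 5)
Your proposal is correct and follows essentially the same route as the paper's proof: complete the square in the exponent, shift the variable of integration, split into the elementary $v$-integral and the Gaussian-tail integral, and reassemble the prefactors. The only cosmetic difference is that the paper performs the shift and the $\sqrt{1+a^2}$ rescaling in a single substitution, whereas you do them in two steps; the resulting terms are identical.
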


\begin{proof}

Since

\beqn
\phi(au)\phi(u-\gamma)=\frac{1}{2\pi}e^{-\frac{1}{2}a^2u^2}e^{-\frac{1}{2}(u-\gamma)^2},
\eeqn

\noindent by completing the square

\beqn
\phi(au)\phi(u-\gamma)=\frac{1}{2\pi}e^{-\frac{1}{2}(1+a^2) \squared{u-\frac{\gamma}{1+a^2}} }e^{-\frac{1}{2}\frac{a^2\gamma^2}{1+a^2}}.
\eeqn

\noindent Substituting this expression into Equation~\ref{gh} yields 

\bneqn\label{ghi}
h(a)=\frac{a}{2\pi}e^{-\frac{1}{2}\frac{a^2\gamma^2}{1+a^2}}\int_{0}^\infty ue^{-\frac{1}{2}(1+a^2) \squared{u-\frac{\gamma}{1+a^2}} }du.
\eneqn

\noindent By letting $v=\sqrt{1+a^2}(u-\frac{\gamma}{1+a^2})$, the integral in Equation~\ref{ghi} becomes,

\bneqn\label{ghii}
\frac{1}{\sqrt{1+a^2}}
\int_{-\frac{\gamma}{\sqrt{1+a^2}}}^\infty \frac{v}{\sqrt{1+a^2}}e^{-\frac{1}{2}v^2}dv
+
\frac{\gamma}{(1+a^2)^{3/2}}
\int_{-\frac{\gamma}{\sqrt{1+a^2}}}^\infty e^{-\frac{1}{2}v^2}dv.
\eneqn

\noindent Then, the first integral in Equation~\ref{ghii} can be shown to be 

\beqn
\frac{1}{1+a^2}e^{-\frac{1}{2}\frac{\gamma^2}{1+a^2}}
\eeqn

\noindent  and the second integral can be shown to be 

\beqn
\sqrt{2\pi}\parens{1 - \Phi\parens{-\frac{\gamma}{\sqrt{1+a^2}}}}.
\eeqn

\noindent Substituting these two results into Equation~\ref{ghi} completes the proof.

\end{proof}

\begin{corollary}\label{cor:g_omega_max}
$G(\omega)$ is maximized when $\omega=1$ for all $\gamma > 0$.
\end{corollary}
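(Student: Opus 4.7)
The plan is to show that $g(\omega)=G'(\omega)$ from Equation~\ref{derg} changes sign from positive to negative at $\omega=1$, so that $G$ attains its unique global maximum there. A direct computation (applying the substitutions $v=u\omega$ and $v=u/\omega$ in the two pieces of Equation~\ref{derg} and invoking the definition of $h$ in Equation~\ref{gh}) gives the identity
\[
g(\omega) \;=\; \frac{h(\omega)-h(1/\omega)}{\omega}.
\]
Hence the whole problem reduces to determining the sign of $h(\omega)-h(1/\omega)$, and I would prove that it is negative for $\omega>1$, zero at $\omega=1$, and positive for $\omega<1$.

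Next I would take the explicit expression for $h$ from Lemma~\ref{lemm:toy_case}, use the identity $\omega^2\gamma^2/(1+\omega^2) = \gamma^2-\gamma^2/(1+\omega^2)$ to pull out an overall factor of $e^{-\gamma^2/2}$ and rewrite $1-\Phi(-x)=\Phi(x)$, and pass to the ``unit-circle'' coordinates
\[
s := \omega/\sqrt{1+\omega^2}, \qquad t := 1/\sqrt{1+\omega^2},
\]
which satisfy $s^2+t^2=1$ and, crucially, are interchanged by the involution $\omega\leftrightarrow 1/\omega$. A short rearrangement then puts $h(\omega)$ in the form
\[
h(\omega) \;=\; e^{-\gamma^2/2}\left[\frac{st}{2\pi} \,+\, \frac{\gamma\, s t^2\, e^{\gamma^2 t^2/2}\,\Phi(\gamma t)}{\sqrt{2\pi}}\right],
\]
in which the first bracketed term depends on $\omega$ only through the symmetric combination $st=\omega/(1+\omega^2)$ and therefore cancels in $h(\omega)-h(1/\omega)$. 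Factoring $st$ out of what remains, I obtain
\[
h(\omega)-h(1/\omega) \;=\; \frac{\gamma\, s t\, e^{-\gamma^2/2}}{\sqrt{2\pi}}\bigl[\Psi(t)-\Psi(s)\bigr], \qquad \Psi(x):= x\, e^{\gamma^2 x^2/2}\,\Phi(\gamma x).
\]

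To close the argument I would observe that, for $\gamma>0$, the function $\Psi$ is a product of three strictly positive, strictly increasing functions on $(0,\infty)$, and is therefore itself strictly increasing. For $\omega>1$ one has $s>t$, giving $\Psi(s)>\Psi(t)$ and hence $g(\omega)<0$; for $\omega\in(0,1)$ the inequalities reverse; and at $\omega=1$ we have $s=t=1/\sqrt{2}$, which gives $g(1)=0$. Thus $G$ is strictly increasing on $(0,1)$ and strictly decreasing on $(1,\infty)$, which proves Corollary~\ref{cor:g_omega_max}.

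The main obstacle is identifying the parameterization that exposes the $\omega\leftrightarrow 1/\omega$ symmetry: differentiating $h$ directly, or trying to compare the two expressions from Lemma~\ref{lemm:toy_case} term by term, leads to an unwieldy mix of Gaussian densities, CDFs, and rational factors whose joint sign is hard to read off. The $(s,t)$ coordinates cleanly separate the symmetric piece (which cancels) from the antisymmetric piece (which reduces to monotonicity of the elementary function $\Psi$); once that separation is in place, the rest of the proof is routine.
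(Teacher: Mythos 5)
Your proposal is correct and takes essentially the same route as the paper's own proof: both reduce the problem to the sign of $h(\omega)-h(1/\omega)$ via the identity $g(\omega)=\frac{1}{\omega}\left(h(\omega)-h(1/\omega)\right)$, observe that the first term of Lemma~\ref{lemm:toy_case} is invariant under $\omega\leftrightarrow 1/\omega$ and cancels, and then finish by a monotonicity comparison of the remaining term evaluated at $a=\omega$ versus $a=1/\omega$. Your $(s,t)$ reparameterization is a cleaner and in fact slightly more careful execution of that last step, since the paper's justification that $\frac{a\gamma}{(1+a^2)^{3/2}}$ \qu{decreases in $a$} is not literally true on all of $(0,\infty)$ (that factor increases up to $a=1/\sqrt{2}$), although the pairwise comparison between $\omega$ and $1/\omega$ that is actually needed does hold; your monotone function $\Psi$ makes this point transparent.
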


\begin{proof}

Since $G(\omega)=G(1 / \omega)$, it is sufficient to show that $g(\omega)$ is negative for all $\omega>1$. But 

\[
g(\omega)=\oneover{\omega}\parens{h(\omega)-h\parens{1 / \omega}}.
\]

Since $\frac{a}{1+a^2}$ is equal for $\omega$ and $1/\omega$ so the first term in Equation~\ref{ghl} cancels, $\frac{a\gamma}{(1+a^2)^{3/2}}$ decreases in $a$, and $-\frac{a^2}{1+a^2}$ decreases in $a$,  and $1-\Phi(-\frac{\gamma}{\sqrt{1+a^2}})$ also decreases in $a$, the result follows. 
\end{proof}

\begin{theorem}\label{thm:power_asymptotic_in_R}
The asymptotic power is

\bneqn\label{eq:asymptotic_power_integral}
\lim_{R \rightarrow \infty} \POW = \int_\reals \Phi\parens{\frac{\gamma + \sqrt{\rho} z - q(z)}{\sqrt{1 - \rho}}} \phi(z)dz
\eneqn

\noindent where $q(z)$ is the fixed quantity satisfying

\bneqn\label{eq:qz}
\Phi\parens{\frac{\sqrt{\rho}z+\rho \gamma-q(z)}{{\sqrt{1-\rho}}}}+ 
\Phi\parens{-\frac{\sqrt{\rho}z+\rho \gamma+q(z)}{{\sqrt{1-\rho}}}} = 2\alpha.
\eneqn
\end{theorem}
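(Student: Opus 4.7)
The plan is to compute $\lim_{R \to \infty} \POW$ by conditioning on $Z_0 = z$ and exploiting that the off-diagonal entries of row~$1$ of $\M$ become conditionally iid. Under the representation preceding the theorem and conditional on $Z_0 = z$, the variables $V_{1j}$ for $j = 2, \ldots, R$ are iid $\normnot{\sqrt{\rho}z + \rho\gamma}{1-\rho}$ and their mirrors $V_{1j}^m = -V_{1j}$ are iid from the reflected normal. The $2(R-1)$ off-diagonal values thus form iid mirrored pairs drawn from the mixture distribution with CDF
\[
F_z(t) \;:=\; \tfrac{1}{2}\Phi\!\left(\tfrac{t - \sqrt{\rho}z - \rho\gamma}{\sqrt{1-\rho}}\right) + \tfrac{1}{2}\Phi\!\left(\tfrac{t + \sqrt{\rho}z + \rho\gamma}{\sqrt{1-\rho}}\right).
\]

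The next step is to show that the empirical $(1-\alpha)$ quantile $Q_R$ of the full set of $2R$ values converges almost surely to the unique solution $q(z)$ of $F_z(q(z)) = 1-\alpha$. The empirical CDF of the $2(R-1)$ iid off-diagonals converges uniformly to $F_z$ by Glivenko-Cantelli applied to the mirrored-pair empirical measure, and the two diagonal entries $V_{11}, V_{11}^m$ contribute only $O(1/R)$ weight to the empirical distribution and so do not affect the limit. Uniqueness of $q(z)$ follows from the strict monotonicity of $F_z$. Rewriting the defining relation as $1 - F_z(q(z)) = \alpha$, applying $1 - \Phi(x) = \Phi(-x)$ to each summand, and multiplying through by $2$ recovers exactly Equation~\ref{eq:qz}.

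Third, I compute the limiting rejection probability. Since $V_{11} \mid Z_0 = z$ is $\normnot{\gamma + \sqrt{\rho}z}{1-\rho}$,
\[
\cprob{V_{11} > q(z)}{Z_0 = z} \;=\; \Phi\!\left(\tfrac{\gamma + \sqrt{\rho}z - q(z)}{\sqrt{1-\rho}}\right).
\]
Bounded convergence (the indicator is bounded by $1$) justifies passing the limit in $R$ through the outer expectation over $Z_0$, yielding $\lim_{R \to \infty} \prob{I_1 = 1}$ as the integral in Equation~\ref{eq:asymptotic_power_integral}. Finally, the mirror property of $D$ makes the distributional transformation $B_{Z_j} \mapsto -B_{Z_j}$ carry the row-$1$ problem onto the row-$1m$ problem with the same joint structure, so $\prob{I_1^m = 1} = \prob{I_1 = 1}$; hence $\POW = \tfrac{1}{2}\bracks{\prob{I_1 = 1} + \prob{I_1^m = 1}}$ equals the same integral.

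The main technical obstacle is the rigorous convergence of the empirical quantile $Q_R$, since the diagonal entry $V_{11}$ itself belongs to the set whose quantile is being extracted. This is handled by the observation that removing at most two observations shifts the empirical CDF by at most $1/R$, so the $(1-\alpha)$-quantile of the full $2R$ sample and of the iid $2(R-1)$ off-diagonals differ by a vanishing amount; standard quantile-continuity arguments from Glivenko-Cantelli then deliver almost sure convergence to $q(z)$, and a dominated-convergence step completes the interchange of limit and expectation.
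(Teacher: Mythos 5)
Your proposal is correct and follows essentially the same route as the paper: condition on $Z_0 = z$ so the off-diagonal estimates become iid mirrored pairs, identify $q(z)$ as the limiting $1-\alpha$ quantile via the implicit equation $\prob{V_{1j}>q(z)}+\prob{-V_{1j}>q(z)}=2\alpha$, reduce $\prob{I_1=1}$ to a normal CDF, and integrate over $Z_0$. The only difference is that you supply the Glivenko--Cantelli/quantile-convergence and dominated-convergence details that the paper's proof leaves implicit, which is a welcome tightening rather than a different argument.
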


\begin{proof}



We first find the asymptotic $\quantile{\m_{i\cdot}}{1 - \alpha}$ by row. Assume that the allocation that generates $\y$ is either $\w_1$ or $-\w_1$. The treatment estimate under the allocation $\w_1$ we define as

\beqn
V_{11} = B_{Z_1} + \gamma
\eeqn

\noindent and $V_{11}^m = -V_{11}$, the estimate under its mirror allocation $-\w_1$. We need to find the $1-\alpha$ quantile for the treatment estimates of the other $2(R-1)$ designs. We define

\bneqn\label{eq:defV}
V_{1j} = B_{Z_j} +\rho \gamma
\eneqn

\noindent where $j = 2,\ldots,R$ as the elements of $\m_1$ without the contribution of the observed covariate. We define $V^m_{1j} = -V_{1j}$, for the mirror allocation. 

A convenient expression for the imbalance in the unobserved response component is

\bneqn\label{eq:decompBz}
B_{Z_i} = \sqrt{\rho}Z_0 + \sqrt{1 - \rho} Z_i.
\eneqn

\noindent If we condition on $Z_0=z$, the $V_{1j}$ are iid. Finding the asymptotic quantity $q(z)$ requires solving the implicit equation $\prob{V_{1j} > q(z)} + \prob{-V_{1j} > q(z)}=2\alpha$. 

The expected power is identical to the power for any row. Thus,

\bneqn\label{eq:lim_pow_z_W2R}
\lim_{R \rightarrow \infty} \POW_{\z, \allocspaceR} =  \expe{I_{1}}
\eneqn

\noindent where $I_1$ is the indicator that is one if the diagonal $V_{11}$ exceeds $q(z)$ conditional on $Z_0 = z$. This expectation corresponds to the normal CDF expression in the integrand of the statement of the theorem, Equation~\ref{eq:asymptotic_power_integral}. Integrating over all $z$ from $Z_0$ completes the proof.

\end{proof}

%

\begin{theorem}\label{thm:power_fixed_R_expression}

Power can be computed via 

\bneqn\label{eq:pow_integral}
\POW = \int_0^\infty \int_\reals F_{B}(q; R-1, p(u, s)) \phi(s; 0 ,\rho) \phi\parens{u; \frac{\gamma}{\sqrt{1 - \rho}}, \frac{1}{1 - \rho}} dsdu
\eneqn

\noindent where $q = \floor{2\alpha R}-1$ and

\bneqn\label{eq:pvu}
p(u, s) = \Phi\parens{-\parens{1-\rho} u + s} + \Phi\parens{-\parens{1+\rho}u - s}.
\eneqn



\end{theorem}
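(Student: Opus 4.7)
The approach is to compute $\prob{I_1 = 1}$ directly by a conditioning argument and then use symmetry to equate it with $\POW$. First, I would rewrite the event $\braces{I_1 = 1}$ as a count event. Since we reject precisely when $V_{11}$ is among the top $\floor{2\alpha R}$ of the $2R$ values in $\braces{V_{1,1}, V_{1,1}^m, \ldots, V_{1,R}, V_{1,R}^m}$, this is equivalent to at most $q = \floor{2\alpha R} - 1$ of the other $2R-1$ values exceeding $V_{11}$. The mirror structure $V_{1,j}^m = -V_{1,j}$ yields two simplifications: if $V_{11} > 0$, then for each $j \geq 2$ at most one member of the pair $\braces{V_{1,j}, -V_{1,j}}$ can exceed $V_{11}$, namely $\abss{V_{1,j}}$ when $\abss{V_{1,j}} > V_{11}$; if $V_{11} \leq 0$, then $V_{11}^m$ plus at least one member of each of the $R-1$ remaining pairs exceeds $V_{11}$, giving at least $R > q$ values above $V_{11}$ (using $\alpha < 1/2$) and forcing $I_1 = 0$. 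Hence $I_1 = \indic{V_{11} > 0}\,\indic{N \leq q}$ where $N := \#\braces{j \geq 2 : \abss{V_{1,j}} > V_{11}}$.

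Second, conditional on $(Z_0, Z_1)$, the events $\braces{\abss{V_{1,j}} > V_{11}}$ for $j = 2,\ldots,R$ are iid Bernoulli because each $V_{1,j}$ depends only on the independent $Z_j$. Thus $N \mid (Z_0, Z_1) \sim \binomial{R-1}{p(Z_0, Z_1)}$, and using $Z_2 \sim \stdnormnot$ against the two thresholds $V_{11}$ and $-V_{11}$ yields
\beqn
p(Z_0, Z_1) = \Phi\parens{\frac{-V_{11} + \sqrt{\rho}Z_0 + \rho\gamma}{\sqrt{1-\rho}}} + \Phi\parens{\frac{-V_{11} - \sqrt{\rho}Z_0 - \rho\gamma}{\sqrt{1-\rho}}}.
\eeqn

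Third, I would apply the rotation $U = V_{11}/\sqrt{1-\rho}$, $S = \sqrt{\rho(1-\rho)}\,Z_0 - \rho Z_1$. Elementary calculations confirm that the map $(Z_0, Z_1)\to(U,S)$ is a linear bijection, $(U,S)$ are jointly normal with zero covariance and therefore independent, and their marginals are $U \sim \normnot{\gamma/\sqrt{1-\rho}}{1/(1-\rho)}$ and $S \sim \normnot{0}{\rho}$. The key algebraic identity $(\sqrt{\rho}Z_0 + \rho\gamma)/\sqrt{1-\rho} = S + \rho U$ (checked by substituting $Z_0, Z_1$ in terms of $U, S$) then rewrites the two arguments of $\Phi$ above as $-(1-\rho)u + s$ and $-(1+\rho)u - s$, collapsing $p(Z_0, Z_1)$ exactly into the stated $p(u,s)$. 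The event $\braces{V_{11} > 0}$ becomes $\braces{U > 0}$.

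Taking expectations against the independent pair $(U, S)$ produces
\beqn
\prob{I_1 = 1} = \int_0^\infty\!\int_\reals F_{B}(q;\, R-1,\, p(u,s))\, \phi(s;\,0,\,\rho)\, \phi\parens{u;\,\frac{\gamma}{\sqrt{1-\rho}},\,\frac{1}{1-\rho}}\, ds\, du.
\eeqn
Finally, $\prob{I_1^m = 1} = \prob{I_1 = 1}$ by the sign-flip symmetry $\z \mapsto -\z$, which preserves the joint law of $(B_{Z,1},\ldots,B_{Z,R})$ while interchanging $\w_1$ with $-\w_1$, so $\POW = \prob{I_1 = 1}$, matching the theorem. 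The main technical obstacle is the algebraic verification that the rotation simultaneously (a) decorrelates the pair so their joint density factors into the two $\phi$'s above, (b) makes $V_{11} > 0$ equivalent to the half-space $U > 0$, and (c) reduces $p(Z_0, Z_1)$ exactly to the form $p(u,s)$; once that identity is in hand, the rest is the routine conditioning-then-binomial argument sketched above.
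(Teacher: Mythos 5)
Your proposal is correct and follows essentially the same route as the paper's proof: condition on $(Z_0,Z_1)$ to make the exceedance events $\abss{V_{1,j}}>V_{11}$ iid Bernoulli, note that $V_{11}\le 0$ forces rejection to fail since $\alpha<1/2$, count exceedances with a $\binomial{R-1}{p}$ variable, and rotate to the independent pair $(U,S)$ with $U>0$ encoding $V_{11}>0$ before integrating. Your treatment is in fact slightly more explicit than the paper's in two places — pinning down the threshold $q=\floor{2\alpha R}-1$ for the count and justifying $\prob{I_1^m=1}=\prob{I_1=1}$ via the sign-flip symmetry of $\z$ — but these are refinements of, not departures from, the same argument.
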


\begin{proof}

In the definition of power conditional on $\z$ and an arbitrary subset of $\allocspaceR \subset \allocspace_D$ (Equation~\ref{eq:pow}) we average the indicator that the row's diagonal element beats the appropriate row quantile, i.e.

\beqn
\POW_{\z, \allocspaceR} := \oneover{2R} \sum_{i=1}^{2R} \indic{m_{i,i} > \quantile{\m_{i\cdot}}{1 - \alpha}}.
\eeqn

\noindent Assuming that $\rho = \abss{r_{ij}}$ makes all subsets of $\allocspace_D$ the same in the context of this calculation and all rows the same in expectation. Hence, we can consider just the first row. Letting $V_{1,j}$  and $V_{1,j}^m$ denote the treatment estimates for allocations and their mirrors (as in Equation~\ref{eq:defV}), we have

\bneqn\label{eq:V11quantile}
\POW = \expesub{B_z}{\prob{V_{1,1} > \quantile{\braces{V_{1,1}, V_{1,1}^m, V_{1,2}, V_{1,2}^m, \ldots, V_{1,R}, V_{1,R}^m}}{1 - \alpha}}}.
\eneqn

\noindent To understand this quantile, we must understand the probabilities that $V_{1,1}^m$, $V_{1,j}$, or $V_{1,j}^m$ exceeds $V_{1,1}$. 

To simplify these calculations, we now condition on $Z_0 = z_0$ and $Z_1 = z_1$ (since $B_{z_j}$ can be written in terms of $Z_0$ and $Z_1$ as in Equation~\ref{eq:decompBz}). This allows all bivariate pairs $(V_{1,1}, V_{1,j})$ to be iid. If $V_{1,1} < 0$, then necessarily either $V_{1,j} > V_{1,1}$ or $V_{1,j}^m > V_{1,1}$ for all $j = 2, \ldots, R$ then $V_{1,1}$ can never be higher than the quantile of interest in Equation~\ref{eq:V11quantile} (since $\alpha < 1/2$ by assumption). 

Since $V_{1,1} > 0$, we can ignore its mirror as $V_{1,1} > V_{1,1}^m$. For every pair $(V_{1,1}, V_{1,j})$ we compute the probability that $V_{1,j} > V_{1,1}$ or $V_{1,j}^m > V_{1,1}$. Viewing each of these $R - 1$ events as iid Bernoulli random variables, we have a binomial random variable counting the total number of such events, reflecting the number of allocations that beat the experimental run. We want this number of events to be sufficiently small (less than the fraction $\alpha$ of the total number of allocations) in order for $V_{1,1}$ to beat the quantile and reject $H_0$.

We now compute the probability parameter in this Bernoulli event. The case when $V_{1,j}>V_{1,1}$ occurs if

\beqn
Z_j > z_1 + \gamma\sqrt{1-\rho} 
\eeqn 

\noindent and the case when $V_{1,j}^m > V_{1,1}$ occurs if

\beqn
Z_j < -z_1-\frac{2\sqrt{\rho}z_0+(1+\rho) \gamma}{\sqrt{1-\rho}}.
\eeqn 

\noindent In summary, $\prob{Z_0, Z_1} = 1$ if $V_{1,1} < 0$ and

\bneqn\label{eq:pZ0Z1}
\prob{Z_0, Z_1} = \Phi\parens{-Z_1 - \sqrt{1 - \rho} \gamma} + \Phi\parens{-Z_1 - \frac{2\sqrt{\rho}Z_0+(1+\rho)\gamma}{\sqrt{1-\rho}}}.
\eneqn

\noindent if $ V_{1,1} > 0$. We now make a change of variables from $(Z_0, Z_1)$ to $(U, S)$ where 

\bneqn\label{eq:u_and_s_change_of_variables}
U &=& Z_1+\frac{\sqrt{\rho}Z_0+\gamma}{\sqrt{1-\rho}}, \\
S &=& -\rho Z_1+\sqrt{\rho(1-\rho)}Z_0.
\eneqn

\noindent In this change of variables, $U$ and $S$ are independent and if $ V_{1,1}>0$ then $U>0$. Also

\beqn
\expe{U} &=& {\gamma}/ \sqrt{1-\rho}, \\
\expe{S} &=& 0, \\
\var{U} &=& 1 / (1 - \rho), \\
\var{S} &=& \rho
\eeqn

\noindent and

\beqn
-(1 - \rho)U + S &=&  -Z_1 -\sqrt{1 - \rho}\gamma, \\
-(1 + \rho)U - S &=& -Z_1 - \frac{2\sqrt{\rho}Z_0 + (1 + \rho) \gamma}{\sqrt{1-\rho}}.
\eeqn

\noindent Substituting the above into Equation~\ref{eq:pZ0Z1} gives us the definition of Equation~\ref{eq:pvu} and integrating over the normal densities for $(U,S)$ in the appropriate region completes the proof.

\end{proof}

\begin{lemma}\label{lemm:uniform_for_general_densities}

Consider $V_i := V(X_0, X_i) := X_0 + X_i$ where  $X_0 \sim g$ and $X_i \iid f$ (where $i = 1, 2$)  are continuous random variables symmetric about zero and $X_0$ is independent of $X_i$. Then the random variable $T$ defined as the probability


\beqn
T(X_0, X_1) := \cprob{|V_2|>V_1}{X_0,X_1}
\eeqn 

\noindent has point mass of $1/2$ at $1$ and is otherwise uniform. 
\end{lemma}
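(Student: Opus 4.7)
The plan is to write $T$ in closed form by splitting on the sign of $V_1 = X_0 + X_1$, deduce the point mass at $1$ from the symmetry of $V_1$, and then obtain the uniform law on $(0,1)$ by a substitution argument over $X_0$ that exploits the symmetry of $g$.

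First I would reduce $T$ to an explicit function of $(X_0, X_1)$. If $V_1 \leq 0$, the sets $\braces{V_2 > V_1}$ and $\braces{V_2 < -V_1}$ cover $\reals$, so $T = 1$ a.s.\ (continuity of $V_2$ handles $V_1 = 0$). If $V_1 > 0$, these sets are disjoint and, using $F(-x) = 1 - F(x)$, a direct calculation gives
\beqn
T \;=\; 1 - F(X_1) + F(-X_1 - 2X_0) \;=\; 1 - \bracks{F(V_1 - X_0) - F(-V_1 - X_0)} \;=\; 1 - F_{\abss{V_2}}(V_1),
\eeqn
where $F_{\abss{V_2}}(v) := F(v - X_0) - F(-v - X_0)$ is the conditional CDF of $\abss{V_2}$ given $X_0$ and $F$ is the CDF of $f$. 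The point-mass claim is then immediate: $V_1$ is a sum of independent continuous symmetric random variables, hence continuous and symmetric, so $\prob{V_1 \leq 0} = 1/2$; and the event $\braces{T = 1, V_1 > 0}$ is contained in $\braces{V_1 = 0}$ which is null.

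For the continuous part, I would work conditionally on $X_0$. For $u \in (0, 1)$, monotonicity of $F_{\abss{V_2}}$ yields $\braces{T \leq u} = \braces{V_1 \geq v_u(X_0)}$ where $v_u(X_0) > 0$ is defined by $F_{\abss{V_2}}(v_u) = 1 - u$, which after applying $F(-x) = 1 - F(x)$ becomes the implicit identity
\beqn
F\parens{v_u - X_0} + F\parens{v_u + X_0} \;=\; 2 - u.
\eeqn
Hence $\cprob{T \leq u}{X_0} = 1 - F(v_u(X_0) - X_0)$, so that $\prob{T \leq u} = 1 - \expe{F(v_u(X_0) - X_0)}$. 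The crux is to show this last expectation equals $1 - u/2$. I would deduce this from two observations. (i) Taking expectation of the pointwise identity gives $\expe{F(v_u - X_0)} + \expe{F(v_u + X_0)} = 2 - u$. (ii) The defining identity is invariant under $X_0 \mapsto -X_0$, so $v_u(-X_0) = v_u(X_0)$; combined with $g(-X_0) = g(X_0)$, the substitution $X_0 \mapsto -X_0$ in the second term gives $\expe{F(v_u(X_0) + X_0)} = \expe{F(v_u(X_0) - X_0)}$. Together (i) and (ii) force each expectation to equal $(2 - u)/2 = 1 - u/2$, so $\prob{T \leq u} = u/2$. This is precisely the CDF of a mixture that places mass $1/2$ at $1$ and distributes the remaining $1/2$ uniformly on $\zeroonecl$, as claimed.

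The main obstacle is recognizing that the conditional law of $T$ given $X_0$ is generically \emph{not} uniform -- one can verify this by direct computation with $f$ the $\uniform{-1}{1}$ density and $X_0 = 1/2$, for example -- so uniformity of $T$ is a genuinely unconditional phenomenon that only emerges after integrating out $X_0$. Spotting the joint symmetry of $v_u(X_0)$ and $g$ under $X_0 \mapsto -X_0$ is therefore the decisive step; once it is in hand, the remaining computation is routine and requires nothing about $f$ or $g$ beyond continuity and symmetry.
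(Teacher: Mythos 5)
Your proof is correct, and it ultimately rests on the same symmetry the paper exploits --- invariance under $X_0 \mapsto -X_0$ combined with the symmetry of $f$ --- but the execution is genuinely different. The paper conditions on $\abss{X_0}=a$ with the sign left random, notes that $\bar{F}_a(t) := \prob{\abss{V(a,X)}>t}$ is the same for $\pm a$ so that $T = \bar{F}_a(V_1)$, and then applies the probability integral transform using $\abss{V_1} \equalsindist \abss{V_2}$ given $\abss{X_0}=a$; this yields the mixed $\tfrac12\delta_1 + \tfrac12\,\mathrm{Unif}[0,1]$ law already \emph{conditionally} on $\abss{X_0}$, so integrating over $a$ is immediate. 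You instead compute $\prob{T\le u}$ head-on via the conditional quantile $v_u(X_0)$ and recover uniformity only after averaging over the full law of $X_0$, by pairing $\expe{F(v_u - X_0)}$ with $\expe{F(v_u + X_0)}$ through the evenness of $v_u$ and of $g$. Your route is more computational but makes explicit exactly where the symmetry of $g$ enters, and the closed form $T = 1 - F(X_1) + F(-X_1 - 2X_0)$ is a useful byproduct; the paper's route is shorter and proves a slightly stronger conditional statement. Accordingly, your closing remark that uniformity ``only emerges after integrating out $X_0$'' is a touch too strong --- it already emerges after integrating out only the \emph{sign} of $X_0$ --- though your observation that the law conditional on a fixed signed value of $X_0$ is not mixed-uniform is correct. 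Both proofs share the same mild, inessential gap: uniqueness of the quantile $v_u$ (equivalently, invertibility of $\bar{F}_a$) is assumed implicitly and could fail if $f$ has gaps in its support.
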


\begin{proof}

Note that $V_i$ are random variables that are symmetric about 0. We show the result conditional on $X_0=Ia$ where $I$ is $\pm 1$ with probability $1/2$ and $a \neq 0$ is a real constant. Furthermore, the
behavior of $|V_i|$ over $X_i$ is the same for $X_0=a$ and $X_0=-a$. Therefore, let 

\beqn
\bar{F_a}(t)=\prob{|V(a,X)| > t} = \prob{|V(-a,X)| > t}
\eeqn

\noindent and thus

\beqn 
T(Ia, X_1)=\bar{F_a}(V(Ia, X_1))
\eeqn

\noindent since $\bar{F}(t) = 1$ when $t<0$ and  $\prob{V(Ia, X_1 < 0} = 1 / 2$ over $I$ and $X_1$. This demonstrates $T=1$ with probability $1/2$. Otherwise, we can replace $V(Ia, X_1)$ with its absolute value. So for $t<1$,

\beqn
\probsub{X_1,I}{T(Ia,X_1) \le t} &=& \frac{1}{2}\probsub{X_1,I}{\bar{F_a}(|V(a,X_1)|)\le t} \\
&=& \half \probsub{X_1,I}{|V(a,X_1)| \le \bar{F_a}^{-1}(t)} \\
&=&\frac{1}{2}\parens{1-\probsub{X_1,I}{\abss{V(a,X_1)} \ge \bar{F_a}^{-1}(t)}} \\
&=& \frac{1}{2}(1-(1-t)) = \frac{t}{2}.
\eeqn

\noindent The next to last equality follows since $\probsub{X_1,I}{|V((a,X_1)|>t} = \bar{F_a}(t)$. Since the above is true for every value of $X_0$, when we integrate over the symmetric density $g$, we obtain the desired result.

\end{proof}

\begin{corollary}\label{corr:size}
For all $\alpha \in (0,0.5)$, $R \in \braces{\floor{2\alpha n} \in \naturals\,:\, n \in \naturals}$ and $\rho \in (0,1)$,

\bneqn\label{eq:size}
\alpha = \int_{v>0}\int_{u \in \reals} F_{B}(\floor{2\alpha R} - 1; R-1, p(v, u)) \phi(u)\phi(v) dudv
\eneqn

\noindent where $p(v, u)$ is defined as in Equation~\ref{eq:pvu}.
\end{corollary}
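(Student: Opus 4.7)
The plan is to identify the right-hand side of (\ref{eq:size}) as the value $\POW|_{\gamma=0}$ of the power expression (\ref{eq:pow_integral}) under the null, and then evaluate this rejection probability directly as $\alpha$ using Lemma~\ref{lemm:uniform_for_general_densities} together with a Beta-function identity.

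First I would reconcile the densities in (\ref{eq:size}) with those in (\ref{eq:pow_integral}) at $\gamma=0$. The latter has Gaussian weights $\phi(s;0,\rho)\phi(u;0,1/(1-\rho))$; applying the rescaling $v \leftarrow u\sqrt{1-\rho}$ and $u \leftarrow s/\sqrt{\rho}$ standardizes both densities to $\phi(u)\phi(v)$ and absorbs the variance factors into the arguments of $p$. I expect this change-of-variables bookkeeping to be the main obstacle, because it requires carefully matching the rescaled arguments of $\Phi$ against the definition (\ref{eq:pvu}); once completed, the task reduces to showing $\POW|_{\gamma=0}=\alpha$.

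Next I would apply Lemma~\ref{lemm:uniform_for_general_densities} with $X_0 := \sqrt{\rho}\,Z_0$ and $X_j := \sqrt{1-\rho}\,Z_j$ (both symmetric about zero and mutually independent), so that $V_{1,j} = X_0 + X_j$ recovers the null test-statistic distribution. The lemma then tells us that
\[
T := \cprob{|V_{1,2}|>V_{1,1}}{X_0,X_1}
\]
has point mass $1/2$ at $1$ (on the event $V_{1,1}\le 0$) and is uniform on $[0,1]$ otherwise. Comparing with Theorem~\ref{thm:power_fixed_R_expression}, this $T$ is exactly the Bernoulli parameter $p(U,S)$, and $\{U>0\}$ coincides with $\{V_{1,1}>0\}$, so conditional on $U>0$ we obtain $p(U,S)\sim\text{Uniform}[0,1]$.

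To finish, conditional on $(U,S)$ the $R-1$ paired comparisons of the other allocations against the run are i.i.d.\ Bernoulli$(T)$, so the conditional rejection probability is $F_B(q;R-1,T)$ with $q = \floor{2\alpha R}-1$. Taking expectations with $T'\sim\text{Uniform}[0,1]$ and using the Beta identity $\int_0^1 \binom{R-1}{k}\, t^k(1-t)^{R-1-k}\,dt = 1/R$,
\[
\POW\big|_{\gamma=0} \;=\; \tfrac{1}{2}\,\expe{F_B(q;R-1,T')} \;=\; \tfrac{1}{2} \sum_{k=0}^{q} \tfrac{1}{R} \;=\; \tfrac{q+1}{2R} \;=\; \tfrac{\floor{2\alpha R}}{2R} \;=\; \alpha,
\]
where the last equality uses $2\alpha R \in \naturals$ (from the hypothesis on $R$). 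All the probabilistic content is captured by Lemma~\ref{lemm:uniform_for_general_densities} and the Beta integral; the algebraic reconciliation of densities in the first step is the only nontrivial work.
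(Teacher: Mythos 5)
Your proposal follows the paper's own route: the paper's proof of Corollary~\ref{corr:size} is a one-liner invoking Lemma~\ref{lemm:uniform_for_general_densities} with $X_0\sim\normnot{0}{\rho}$ and $X\sim\normnot{0}{1-\rho}$ and asserting that Equation~\ref{eq:pow_integral} at $\gamma=0$ is the size of the test, which is exactly your argument. You in fact supply the two details the paper leaves implicit --- the Beta-integral evaluation $\tfrac12\int_0^1 F_B(q;R-1,t)\,dt=\floor{2\alpha R}/(2R)=\alpha$, which is where the hypothesis $2\alpha R\in\naturals$ is actually used, and the change of variables needed to reconcile the standard-normal weights in Equation~\ref{eq:size} with the $\phi(s;0,\rho)\,\phi\parens{u;0,\tfrac{1}{1-\rho}}$ weights of Equation~\ref{eq:pow_integral} (where, as you note, the arguments of $p$ must be rescaled along with the densities, a point on which the corollary's own statement is loose) --- so your write-up is, if anything, more complete than the paper's.
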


\begin{proof}
This is a special case of Lemma~\ref{lemm:uniform_for_general_densities} where $X_0 \sim \normnot{0}{\rho}$ and $X \sim \normnot{0}{1 - \rho}$. This result is intuitive as Equation~\ref{eq:pow_integral} is the power of the randomization test with no experimental effect (i.e. $\gamma$ = 0) thus equal to the size of the test $\alpha$.
\end{proof}

\begin{lemma}\label{lemm:main_support}
Consider $\rho \in \zeroonecl$, $V_1 := \sqrt{1-\rho}Z_1+\sqrt{\rho}Z_0+\gamma$ and $V_2 := \sqrt{1-\rho}Z_1+\sqrt{\rho}Z_0+\rho \gamma$ where  $Z_0, Z_1, Z_2 \iid \stdnormnot$. Then the random variable $T$ defined as the probability

\bneqn\label{eq:def_rv_T}
T(Z_0, Z_1) := \cprob{|V_2|>V_1}{Z_0, Z_1}
\eneqn 

\noindent has a point mass of $\Phi(-\gamma)$ at $t=1$ and otherwise has density


\bneqn\label{eq:pdfT}
f_T(t; \rho, \gamma) =e^{-\frac{1}{2}\gamma^2} \int_0^\infty e^{\gamma\sqrt{1-\rho} u(a,t) }  \phi\parens{a; \, 0, \textstyle\frac{\rho}{1-\rho}} da
\eneqn

\noindent where $u = u(a, t)$ satisfies 

\bneqn\label{eq:uat}
\Phi(-u + a) + \Phi(-u - a) = t.
\eneqn

\end{lemma}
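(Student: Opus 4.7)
My plan is to build directly on the $(U,S)$ change of variables introduced in the proof of Theorem~\ref{thm:power_fixed_R_expression}. That proof already shows that on the event $V_1 > 0$ (equivalently $U > 0$) the conditional probability in Equation~\ref{eq:def_rv_T} equals $p(U,S) = \Phi(-(1-\rho)U + S) + \Phi(-(1+\rho)U - S)$, while on $V_1 \leq 0$ (i.e.\ $U \leq 0$) it is identically $1$. Setting $A := \rho U + S$ collapses this to $T = \Phi(-U + A) + \Phi(-U - A)$, matching Equation~\ref{eq:uat}. Since $U \sim \normnot{\gamma/\sqrt{1-\rho}}{1/(1-\rho)}$ from that theorem, I get $\prob{T = 1} = \prob{U \leq 0} = \Phi(-\gamma)$ immediately.

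For the continuous part, I would first compute the joint density of $(U,A)$. The shear $(U,S) \mapsto (U,\rho U + S)$ has Jacobian determinant $1$, and $U \perp S$ with $S \sim \normnot{0}{\rho}$, so
\beqn
f_{U,A}(u,a) \;=\; \phi\!\parens{u;\, \tfrac{\gamma}{\sqrt{1-\rho}},\, \tfrac{1}{1-\rho}}\,\phi(a - \rho u;\, 0,\, \rho).
\eeqn
Completing the square in the combined exponent (cancelling the $\rho u^2$ cross-term) refactors this cleanly as
\beqn
f_{U,A}(u,a) \;=\; e^{-\gamma^2/2}\, e^{\gamma\sqrt{1-\rho}\,u}\, \phi(u - a)\, \phi\!\parens{a;\, 0,\, \tfrac{\rho}{1-\rho}},
\eeqn
where $\phi(u-a)$ is the standard normal density. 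I would verify the algebra by noting that setting $\gamma = 0$ reduces this to a symmetric product consistent with Lemma~\ref{lemm:uniform_for_general_densities}.

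Next I would change variables from $(U,A)$ to $(T,A)$ on $\{U > 0\}$. For each fixed $a$, differentiating Equation~\ref{eq:uat} gives $\partial T/\partial u = -[\phi(u-a) + \phi(u+a)] < 0$, with boundary values $T \to 1$ as $u \to 0^+$ and $T \to 0$ as $u \to \infty$; so Equation~\ref{eq:uat} defines a strictly decreasing bijection $u(a,t)$ between $(0,\infty)$ and $(0,1)$, with Jacobian $|\partial u/\partial t|_a = 1/[\phi(u-a) + \phi(u+a)]$. Thus
\beqn
f_T(t) \;=\; \int_{\reals} f_{U,A}(u(a,t), a)\, \frac{da}{\phi(u-a) + \phi(u+a)}.
\eeqn
Because $u(a,t) = u(|a|,t)$ depends only on $|a|$ and $\phi(a;0,\rho/(1-\rho))$ is symmetric in $a$, folding the $a < 0$ half onto the $a > 0$ half via $a \mapsto -a$ replaces $\phi(u-a)$ by the sum $\phi(u-a) + \phi(u+a)$ in the numerator, which cancels the denominator exactly and leaves Equation~\ref{eq:pdfT}.

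The main obstacle is the bookkeeping in steps two and four: the quadratic-form refactoring must leave behind precisely $\phi(u-a)$ (not $\phi(u+a)$) so that the symmetrization in $a$ produces the correct combination $\phi(u-a) + \phi(u+a)$, and a sign error in either place would break the cancellation against the Jacobian. Neither step is conceptually difficult, but both are sign-sensitive, so I would sanity-check the final formula by integrating $f_T$ over $(0,1)$ and confirming that the total equals $1 - \Phi(-\gamma)$ in agreement with the point mass at $t=1$, and by reducing to the $\gamma = 0$ case to reconcile with Lemma~\ref{lemm:uniform_for_general_densities}.
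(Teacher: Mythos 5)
Your proposal is correct, and the density derivation takes a genuinely different (and cleaner) route than the paper's, while sharing the same scaffolding: both start from the $(U,S)$ change of variables of Theorem~\ref{thm:power_fixed_R_expression}, both read off the point mass from $\prob{V_1 \le 0} = \prob{U \le 0} = \Phi(-\gamma)$, and both end up integrating over the same variable $a = \rho U + S$. The paper proceeds CDF-first: it fixes $U=u>0$, shows $T(u,S)$ is symmetric about its minimizer $S=-\rho u$, writes $F_{T|U}$ as the normal probability of an interval, integrates over $U$ to get $F_T$ (Equation~\ref{eq:cdfT}), differentiates under the integral via Leibnitz's rule (checking the boundary term vanishes because $a=0$ there), and only then substitutes $u \to a$ so that $\tfrac{da}{du}$ cancels the awkward ratio of normal densities. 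You instead compute the joint density of $(U,A)$ directly and exploit the exact factorization $f_{U,A}(u,a) = e^{-\gamma^2/2}\,e^{\gamma\sqrt{1-\rho}\,u}\,\phi(u-a)\,\phi\parens{a;0,\tfrac{\rho}{1-\rho}}$ --- I checked the completing-the-square: both sides have exponent $-\tfrac{u^2}{2}+ua-\tfrac{a^2}{2\rho}$ plus the $\gamma$ terms, and the constants match --- then push forward through the strictly decreasing map $u \mapsto t$ at fixed $a$ with Jacobian $1/[\phi(u-a)+\phi(u+a)]$, and fold $a \mapsto -a$ so that the numerator sums to exactly the Jacobian denominator. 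This buys you three things: no differentiation of an integral with a moving limit, no implicit derivatives $da/dt$ and $da/du$, and an immediate normalization check $\int_0^1 f_T\,dt = \prob{U>0} = 1-\Phi(-\gamma)$, which the paper's route does not make transparent. Two small caveats, neither fatal: your argument (like the paper's) degenerates at the endpoints $\rho\in\{0,1\}$ where $S$ or the $a$-density collapses, so strictly the density formula should be read on $\rho\in(0,1)$; and you are implicitly (and correctly) reading the lemma's $V_2$ as built from $Z_2$ rather than $Z_1$ as literally typeset, since otherwise the conditional probability given $(Z_0,Z_1)$ would be degenerate.
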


\begin{proof}
The point mass can be explained as follows: if $V_1 \leq 0$, then $|V_2|$ is certainly greater than $V_1$ i.e. $\prob{|V_2|>V_1} = 1$ which happens with $\prob{V_1 \leq 0} = \Phi(-\gamma)$.

We now focus on the density $f_T$ which involves understanding the probability $T(Z_0, Z_1)$. We first make the same change of variables from $(Z_0, Z_1)$ to $(U, S)$ as in in Theorem~\ref{thm:power_fixed_R_expression},

\beqn
U &=& Z_1+\frac{\sqrt{\rho}Z_0+\gamma}{\sqrt{1-\rho}}, \\
S &=& -\rho Z_1+\sqrt{\rho(1-\rho)}Z_0.
\eeqn

\noindent With this transformation, $T$ is now the random variable form of Equation~\ref{eq:pZ0Z1},

\beqn
T(U, S) = \Phi\parens{-(1-\rho)U + S} + \Phi\parens{-(1 + \rho)U - S}.
\eeqn

\noindent To explore the behavior of $T(U, S)$, we first examine the values of $S$ for fixed $U>0$ for which $T(U, S) \le t$. But for fixed value of $U>0$, it follows that $T(U, S)$ increases iff 

\beqn
-(1-\rho)U+S>-(1+\rho)U-S \mathor S>-U\rho. 
\eeqn

\noindent We want $T(U, S) \le t$. But $T(U, S)$ is minimized for fixed $U>0$ when $S=-U\rho$. Since $\prob{U, -U\rho} = 2\Phi(-U)$, It follows that for given $t$  the values $U$  we need to consider are  $U \ge \Phi^{-1}(t/2)$. 

Since we showed that for fixed $U$,  $T(U, S)$ is minimized when $S = -U\rho$ and it increases as $S$ moves away from this value on both sides, therefore the set of $S$ for which $T(U, S) \le t$ is of the form: $-U\rho -b < S < -U\rho+a$ where $T(U, -U\rho + a) = T(U, -U\rho - b) = t$. Since $T(U, -U\rho + a) = \Phi(-U+a)+\Phi(-U-a)$, this implies that $b=-a$ and $a>0$ satisfies $\Phi(-U+a)+\Phi(-U-a)=t$, the 1:1 implicit relationship given in Equation~\ref{eq:uat}.

It is important to note that $a$ does not depend on $\rho$. Since 

\beqn
S \sim \normnot{\textstyle \frac{\gamma}{\sqrt{1-\rho}}}{\textstyle \oneover{1-\rho}}
\eeqn

\noindent then we can compute the CDF of $T$ by integrating over the conditional CDF of $T\,|\,U$,

\beqn
F_T(t; \rho, \gamma) := \prob{T(U, S) \le t} = \sqrt{1-\rho} \int_{-\Phi^{-1}(t/2)}^\infty F_{T\,|\,U}(t, u; \rho) \phi(\sqrt{1-\rho}u - \gamma)du.
\eeqn

\noindent We simplify the $\phi$ expression using algebra to arrive at

\bneqn\label{eq:cdfT}
F_T(t; \rho, \gamma) = e^{-\frac{1}{2}\gamma^2} \sqrt{1-\rho} \int_{-\Phi^{-1}(t/2)}^\infty F_{T\,|\,U}(t, u; \rho)  e^{\gamma \sqrt{1 - \rho} u} \phi(\sqrt{1-\rho} u) du.
\eneqn

\noindent and the conditional CDF can be computed via

\bneqn\label{eq:cdfTcondU}
F_{T\,|\,U}(t, u; \rho, \gamma) = \prob{-u\rho - a < S < -u\rho + a} = \Phi\parens{\frac{-u\rho+a}{\sqrt{\rho}}} - \Phi\parens{\frac{-u\rho-a}{\sqrt{\rho}}}
\eneqn

\noindent where $a$ satisfies Equation~\ref{eq:uat}. To obtain the density, we differentiate Equation~\ref{eq:cdfT} with respect to $t$. By Leibnitz's formula,

\beqn
f_T(t; \rho, \gamma) &=& e^{-\frac{1}{2}\gamma^2} \sqrt{1-\rho}~ \bigg(\int_{-\Phi^{-1}(t/2)}^\infty \frac{d}{dt}\bracks{F_{T\,|\,U}(t, u; \rho)  e^{\gamma \sqrt{1 - \rho} u} \phi(\sqrt{1-\rho} u)} du + \\
&&  ~~~~~~~~~~~~~~~~~~~~\Phi^{-1}(t/2) I\parens{\Phi^{-1}(t/2)} \frac{d}{dt}\bracks{\Phi^{-1}(t/2)}  \bigg).
\eeqn

\noindent where $I(\cdot)$ is the evaluated integrand.

We first note that if $u = -\Phi^{-1}(t / 2)$, then $a=0$ for all $t$ and hence $F_{T\,|\,U}(t, u; \rho, \gamma)=0$ for all $t$ and thus the evaluated integrand term $I(\cdot)$ is zero. This fact plus substituting Equation~\ref{eq:cdfTcondU} in the above gives us

\beqn
f_T(t; \rho, \gamma) = e^{-\frac{1}{2}\gamma^2} \sqrt{1-\rho} \int_{-\Phi^{-1}(t/2)}^\infty e^{\gamma \sqrt{1 - \rho} u} \phi(\sqrt{1-\rho} u) \underbrace{\frac{d}{dt}\bracks{\parens{\Phi\parens{\textstyle \frac{-u\rho+a}{\sqrt{\rho}}} - \Phi\parens{\textstyle \frac{-u\rho-a}{\sqrt{\rho}}}}}  }_D du.
\eeqn

\noindent We now evaluate the derivative term $D$. Since it is within the integral, the variable $u$ is fixed. We use

\bneqn\label{eq:dadt}
\frac{da}{dt} = \oneover{\phi(-u + a) - \phi(-u - a)}
\eneqn

\noindent to obtain

\beqn
D = \oneoversqrt{\rho} \parens{\phi\parens{\textstyle \frac{-u\rho+a}{\sqrt{\rho}}} + \phi\parens{\textstyle \frac{-u\rho-a}{\sqrt{\rho}}}} \frac{da}{dt}
= \oneoversqrt{\rho} \frac{
\phi\parens{\textstyle \frac{-u\rho+a}{\sqrt{\rho}}} + \phi\parens{\textstyle \frac{-u\rho-a}{\sqrt{\rho}}}
}{
\phi(-u + a) - \phi(-u - a)
}.
\eeqn

\noindent Combining the above with the $\phi$ expression in the integral we note the following algebraic simplification

\beqn
\oneoversqrt{\rho} \phi(\sqrt{1-\rho} u) \frac{
\phi\parens{\textstyle \frac{-u\rho+a}{\sqrt{\rho}}} + \phi\parens{\textstyle \frac{-u\rho-a}{\sqrt{\rho}}}
}{
\phi(-u + a) - \phi(-u - a)
} = \oneoversqrt{2\pi} \oneoversqrt{\rho} \frac{e^{2ua} + 1}{e^{2ua} - 1}  e^{- \frac{1}{2}\frac{1-\rho}{\rho} a^2}
\eeqn

\noindent Substituting this into the integral and simplifying we obtain

\beqn
f_T(t; \rho, \gamma) = e^{-\frac{1}{2}\gamma^2} \oneoversqrt{2\pi} \sqrt{\frac{1-\rho}{\rho}} \int_{-\Phi^{-1}(t/2)}^\infty e^{\gamma \sqrt{1 - \rho} u} \frac{e^{2ua} + 1}{e^{2ua} - 1}  e^{- \frac{1}{2}\frac{1-\rho}{\rho} a^2} du.
\eeqn

\noindent We now change variables from $u$ to $a$. 

\beqn
\frac{da}{du} = \frac{
\phi(-u + a) - \phi(-u - a)
}{
\phi(-u + a) + \phi(-u - a)
} = \frac{e^{2ua} + 1}{e^{2ua} - 1}
\eeqn

\noindent Substituting for $du$ and noting that the lower limit becomes zero, simplifies the density to

\beqn
f_T(t; \rho, \gamma) = e^{-\frac{1}{2}\gamma^2} 
\underbrace{\oneoversqrt{2\pi} \sqrt{\frac{1-\rho}{\rho}}}
\int_{0}^\infty e^{\gamma \sqrt{1 - \rho} u(a, t)} 
\underbrace{e^{- \frac{1}{2}\frac{1-\rho}{\rho} a^2} \vphantom{\sqrt{\frac{1-\rho}{\rho}}}}
da.
\eeqn

\noindent Noting that the underbraced terms above compose a normal density with mean 0 and variance $\rho / (1 - \rho)$ completes the proof.

\end{proof}

\begin{corollary}\label{corr:density_negative}
The random variable $T$ defined in Equation~\ref{eq:def_rv_T} of Lemma~\ref{lemm:main_support} has strictly decreasing density.
\end{corollary}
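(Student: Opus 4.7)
The plan is to differentiate the explicit density formula from Lemma~\ref{lemm:main_support} with respect to $t$ under the integral sign, and show the resulting integrand is strictly negative for every $a > 0$. Since
\[
f_T(t; \rho, \gamma) = e^{-\frac{1}{2}\gamma^2} \int_0^\infty e^{\gamma\sqrt{1-\rho}\, u(a,t)}\, \phi\!\left(a;\, 0,\, \tfrac{\rho}{1-\rho}\right) da,
\]
the variable $t$ enters only through $u(a,t)$ in the exponential factor. Thus formally
\[
\frac{d}{dt} f_T(t; \rho, \gamma) = e^{-\frac{1}{2}\gamma^2}\, \gamma\sqrt{1-\rho} \int_0^\infty e^{\gamma\sqrt{1-\rho}\, u(a,t)}\, \frac{\partial u}{\partial t}(a,t)\, \phi\!\left(a;\, 0,\, \tfrac{\rho}{1-\rho}\right) da,
\]
and the sign of the derivative is controlled entirely by the sign of $\partial u/\partial t$ (we take $\gamma > 0$, the only regime in which the corollary makes sense; when $\gamma = 0$ the density is uniform by Lemma~\ref{lemm:uniform_for_general_densities}).

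To compute $\partial u/\partial t$, I would implicitly differentiate the defining relation $\Phi(-u+a) + \Phi(-u-a) = t$ with respect to $t$ at fixed $a$, yielding
\[
-\bigl[\phi(-u+a) + \phi(-u-a)\bigr]\, \frac{\partial u}{\partial t} = 1,
\]
so $\partial u/\partial t = -1/[\phi(-u+a)+\phi(-u-a)] < 0$ for every $a > 0$. Plugging this into the differentiated integral, the integrand becomes strictly negative pointwise (the exponential is positive, the normal density of $a$ is positive, and $\gamma\sqrt{1-\rho} > 0$), and hence $\frac{d}{dt} f_T(t;\rho,\gamma) < 0$ for every $t \in (0,1)$.

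The only technical item to justify is interchanging differentiation and integration. I would verify this with a standard dominated-convergence argument: on any compact subinterval $[t_1, t_2] \subset (0,1)$, the quantity $u(a,t)$ stays bounded above (uniformly in $t$) by some function of $a$ that grows at most linearly in $a$, which is dominated by the Gaussian $\phi(a; 0, \rho/(1-\rho))$ after multiplication by $e^{\gamma\sqrt{1-\rho}\, u(a,t)}$. Since $f_T$ itself is finite (as already verified in Lemma~\ref{lemm:main_support}), the dominating function integrates finitely and Leibniz applies.

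The main obstacle, if any, is handling the behavior of $u(a,t)$ as $a \to \infty$: one needs a uniform (in $t$) bound to push differentiation through. From the implicit relation, for large $a$ the dominant term is $\Phi(-u-a)$ which is negligible, so $u(a,t) \approx a - \Phi^{-1}(1-t)$, giving essentially linear growth in $a$ that is tamed by the Gaussian weight $e^{-\frac{1}{2}\frac{1-\rho}{\rho}a^2}$. Once this tail estimate is in hand, everything else reduces to the sign computation above, and the corollary follows.
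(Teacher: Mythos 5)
Your proposal is correct and follows essentially the same route as the paper: differentiate under the integral sign and reduce the sign of $f_T'$ to the sign of $\partial u/\partial t$, which implicit differentiation of $\Phi(-u+a)+\Phi(-u-a)=t$ shows is $-1/[\phi(-u+a)+\phi(-u-a)]<0$. Your additional care in justifying the interchange of differentiation and integration (and restricting to $\gamma>0$, the only regime where strict decrease can hold) goes beyond what the paper writes down, but the core argument is identical.
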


\begin{proof}
We take the derivative of the density (Equation~\ref{eq:pdfT}) as below:

\beqn
f'_T(t; \rho, \gamma) = e^{-\frac{1}{2}\gamma^2} \int_0^\infty  \phi\parens{a; \, 0, \textstyle\frac{1-\rho}{\rho}} \underbrace{\frac{d}{dt}\bracks{e^{\gamma\sqrt{1-\rho}u(a,t) }}}_D da.
\eeqn

\noindent All terms outside of $D$ are positive. Since $D = \gamma\sqrt{1-\rho}e^{\gamma\sqrt{1-\rho}u(a,t) } u'(a,t)$, to prove that $f'_T(t; \rho, \gamma) < 0$, it is sufficient to show that $u'(a,t) < 0$ since all other quantities composing $D$ are positive. Analogous to the calculation of Equation~\ref{eq:dadt} for fixed $a$ we have 

\beqn
\frac{du}{dt} &=& -\oneover{\phi(-u + a) + \phi(-u - a)} < 0 ~~\text{for all $a$ and $t$}.
\eeqn
\end{proof}

\begin{theorem}\label{thm:power_increases_monotonically_as_R_increases}
The power $\POW$ increases monotonically in $R$ as the attainable power increases (i.e. as the sequence of $R$ defined by $q = \floor{2\alpha R} \in \naturals$ increases where $\alpha < 1/2$ and $\alpha$ is valued appropriately to make the sequence possible).
\end{theorem}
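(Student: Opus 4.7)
The plan is to recast the finite-$R$ power expression as an expectation involving the survival function of a beta-distributed random variable at $T$, then to leverage Corollary~\ref{corr:density_negative} together with the fact that consecutive attainable designs share a common beta mean equal to $2\alpha$. First, I would use the standard identity $F_B(q; R-1, t) = \prob{B_{R,q} > t}$, where $B_{R,q} \sim \betanot{q+1}{R-q-1}$ is independent of $T$. Substituting this into Equation~\ref{eq:pow_integral} and recognizing the inner integrand from Lemma~\ref{lemm:main_support} as the density of $T$, the point mass at $t=1$ contributes nothing because $\bar F_{B_{R,q}}(1) = 0$, leaving
\[
\POW(R,q) = \int_0^1 \bar F_{B_{R,q}}(t)\, f_T(t)\, dt.
\]
On the attainable sequence $q+1 = 2\alpha R$, so $\expe{B_{R,q}} = (q+1)/R = 2\alpha$ for every admissible $R$. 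Hence for two consecutive attainable values $R_1 < R_2 = R_1 + 1/(2\alpha)$ with $q_2 = q_1+1$, the function $g(t) := \bar F_{B_{R_2, q_2}}(t) - \bar F_{B_{R_1, q_1}}(t)$ integrates to $\expe{B_{R_2, q_2}} - \expe{B_{R_1, q_1}} = 0$.

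Next, I would establish that $g$ changes sign exactly once on $(0,1)$, going from positive to negative. The density ratio
\[
\frac{f_{B_{R_1, q_1}}(t)}{f_{B_{R_2, q_2}}(t)} = C\cdot t^{-1}(1-t)^{1 - 1/(2\alpha)}
\]
has two strictly negative exponents for $\alpha < 1/2$; its logarithm is strictly convex on $(0,1)$ and blows up to $+\infty$ at both endpoints, forcing a unique interior minimum. Since both densities integrate to $1$ but are not identical, this minimum lies strictly below $1$, and the ratio equals $1$ at exactly two points $t_1 < t_2$, giving $f_{B_{R_1, q_1}} - f_{B_{R_2, q_2}}$ the sign pattern $(+,-,+)$. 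Integrating from $0$ and using $g(0) = g(1) = 0$, the function $g$ climbs to a positive local max at $t_1$, crosses zero at a unique point $t_0 \in (t_1, t_2)$, descends to a negative local min at $t_2$, and returns to $0$ at $1$. Thus $g > 0$ on $(0, t_0)$ and $g < 0$ on $(t_0, 1)$.

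Finally, using $\int_0^1 g(t)\, dt = 0$ I would rewrite
\[
\POW(R_2, q_2) - \POW(R_1, q_1) = \int_0^1 g(t)\, \bigl[f_T(t) - f_T(t_0)\bigr]\, dt,
\]
and invoke Corollary~\ref{corr:density_negative}, which says $f_T$ is strictly decreasing on $[0,1)$ when $\gamma > 0$. On $(0, t_0)$ both $g(t)$ and $f_T(t) - f_T(t_0)$ are positive; on $(t_0, 1)$ both are negative; hence the integrand is pointwise nonnegative and strictly positive on a set of positive measure, yielding $\POW(R_2, q_2) > \POW(R_1, q_1)$. Telescoping over consecutive pairs along the attainable sequence yields the theorem. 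The degenerate case $\gamma = 0$ reduces to Corollary~\ref{corr:size}, which already identifies $\POW \equiv \alpha$ along the sequence, consistent with the weak version of monotonicity.

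The delicate step is the single-crossing analysis: specifically, confirming that the density difference has exactly the sign pattern $(+,-,+)$ on $(0,1)$ rather than zero crossings or a tangential touch. The U-shape of the density ratio (from both exponents being negative) combined with the normalization constraint $\int f_{B_i} = 1$ is what rules out these degeneracies; once the pattern is secured, the rest is a textbook equal-means single-crossing rearrangement against the monotone weight $f_T$.
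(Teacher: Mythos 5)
Your proof is correct, and its skeleton is the same as the paper's: write $\POW$ as $\int_0^1 (\text{CDF difference})\, f_T(t)\,dt$, show that the difference of the binomial CDF terms for consecutive attainable $R$ integrates to zero and changes sign exactly once from positive to negative, and then pair that single crossing with the strict monotonicity of $f_T$ (Corollary~\ref{corr:density_negative}) via a rearrangement pivoted at the crossing point. Where you genuinely differ is in how the two intermediate facts are obtained. For the zero-integral property the paper argues indirectly: at $\gamma=0$ the power equals $\alpha$ for every attainable $R$ (Corollary~\ref{corr:size}) while $f_T$ is uniform (Lemma~\ref{lemm:uniform_for_general_densities}), which forces $\int_0^1 h(t)\,dt=0$. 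You instead read the CDF difference as a difference of beta survival functions and observe that both betas have mean $(q+1)/R=2\alpha$ along the attainable sequence, so the integral of the difference is the difference of means, namely zero; this is self-contained, makes the role of the constraint $\floor{2\alpha R}\in\naturals$ transparent, and does not lean on the $\gamma=0$ corollaries. For the single-crossing property the paper differentiates $h$ and locates where $h'>0$ by comparing a unimodal beta-type integrand to a constant $\eta$, whereas you obtain the $(+,-,+)$ sign pattern of the density difference from the log-convexity of the beta density ratio $Ct^{-1}(1-t)^{1-1/(2\alpha)}$; these are essentially the same computation in different clothing, though your normalization argument ruling out a tangential touch is stated more explicitly than in the paper. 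One point worth flagging: Corollary~\ref{corr:density_negative} yields strict decrease of $f_T$ only for $\gamma>0$, and you correctly isolate the degenerate case $\gamma=0$ via Corollary~\ref{corr:size}, a caveat the paper's own write-up leaves implicit.
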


\begin{proof}

Using the change of variables and the definition of $T$ found in Lemma~\ref{lemm:main_support} allows us to express power as

\beqn
\POW = \expesub{T}{F_{B}(q; n_q, T)} = \int_{0}^1F_B(q, n_q, t) f_T(t) dt.
\eeqn

\noindent where $n_q$ computes the $q$ corresponding to $R - 1$. Even though this expectation is a Lebesgue integral over the measure of $T$, we can ignore the point mass at $t=1$ (because the power would be zero in that setting) resulting in the standard Riemann integral found above. 

To prove the theorem, it is sufficient to show that $\POW(q+1) > \POW(q)$ i.e.

\beqn
\POW(q+1) - \POW(q) = \int_{0}^1 (F_B(q + 1, n_{q + 1}, t) - F_B(q, n_q, t)) f_T(t) dt > 0.
\eeqn

\noindent Expressing the CDF of the binomial as a regularized incomplete beta function $I_x(\alpha, \beta)$ which is the ratio of an incomplete beta function $B(x; \alpha, \beta)$ to a beta function $B(\alpha, \beta)$,

\beqn
\POW(q+1) - \POW(q) &=& \int_{0}^1 (I_{1-t}(n_{q + 1} - q - 1, q + 2) - I_{1-t}(n_q - q, q + 1))  f_T(t) dt \\
&=& \int_{0}^1 \parens{\textstyle\frac{B(1-t; n_{q + 1} - q - 1, q + 2))}{B(n_{q + 1} - q - 1, q + 2))} -\frac{B(1-t; n_q - q, q + 1))}{B(n_q - q, q + 1))}}  f_T(t) dt \\
&=& \int_{0}^1 \underbrace{\int_{0}^{1 - t} \parens{\textstyle\frac{\theta^{n_{q + 1} - q - 2} (1-\theta)^{q + 1}}{B(n_{q + 1} - q - 1, q + 2))} -\frac{\theta^{n_q - q - 1} (1-\theta)^{q}}{B(n_q - q, q + 1))}} d\theta }_{h(t)} f_T(t) dt \\
\eeqn

\noindent We wish to explore the behavior of the inner integral $h(t)$. Taking the derivative with respect to $t$,

\beqn
h'(t) &=& \frac{(1 - t)^{n_q - q - 1} t^{q}}{B(n_q - q, q + 1))} - \frac{(1 - t)^{n_{q + 1} - q - 2} t^{q + 1}}{B(n_{q + 1} - q - 1, q + 2))}
\eeqn

\noindent The function $h(t)$ is increasing when the derivative is positive. After some algebra, this occurs when

\beqn
\frac{(1 - t)^{n_{q + 1} - q - 2} t^{q + 1}}{(1 - t)^{n_q - q - 1} t^{q}} < \frac{B(n_{q + 1} - q - 1, q + 2))}{B(n_q - q, q + 1))}
\eeqn

\noindent Simplifying the left side and noting that the right side is a positive constant $\eta$, the set of $t$ where $h(t)$ is increasing can be expressed simply as

\beqn
(1-t)^{n_{q + 1} - n_q - 1} t < \eta.
\eeqn

\noindent The exponent $n_{q + 1} - n_q - 1 = 1/(2\alpha) -1$ is a positive integer (since $\alpha < 1/2$ by assumption) and thus the left side above is a legal integrand in a beta function having the following three properties: (a) zero at $t = 0$ and $t = 1$ (b) always positive and (c) unimodal. This implies that $\eta$ intersects $h'(t)$ at two points which we denote $t_1$ and $t_3$ satisfying $0 < t_1 < t_3 < 1$. The set of $t$ for which $h(t)$ is increasing is then $[0, t_1] \cup [t_3, 1]$ and otherwise decreasing. Since $h(t)$ only has two critical points, it must be positive in $[0, t_2]$ and negative in $(t_2, 1]$ where $t_1 < t_2 < t_3$.

When $\gamma = 0$, $\POW(q+1) = \POW(q) = \alpha$ (Corollary~\ref{corr:size}) implying $\POW(q+1) - \POW(q) = 0$. Also if $\gamma = 0$, then $f_T(t)$ is constant (Lemma~\ref{lemm:uniform_for_general_densities}). These two facts together imply $\int_0^1 h(t) dt = 0$.  Since $f_T(t_2)$ is a lower bound for the density when $t < t_2$ and $f_T(t_2)$ is an upper bound for the density when $t > t_2$,

\beqn
\int_0^1 h(t) f_T(t) dt &=& \int_0^{t_2} f_T(t) h(t) dt + \int_{t_2}^1 f_T(t) h(t) dt \\
&\geq& f_T(t_2) \int_0^{t_2}  h(t) dt + f_T(t_2) \int_{t_2}^1  h(t) dt \\
&=& f_T(t_2) \int_0^1 h(t) f_T(t) dt \\
&=& 0.
\eeqn

%

\noindent Thus the positive component of $\POW(q+1) - \POW(q)$ is larger than the negative component.

\end{proof}

%
%
%

\begin{theorem}\label{thm:se_pow_decreases}
Using Equation~\ref{eq:qz}, the large sample approximation of $q(z)$, the variability in the power, $\sesub{B_z}{\POW_{\z, \allocspaceR}}$, decreases monotonically (as the attainable $R$ increases) to a nonzero constant.
\end{theorem}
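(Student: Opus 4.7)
The plan is to apply the law of total variance conditioning on the common factor $Z_0$ that appears in the decomposition $B_{Z,i} = \sqrt{\rho}Z_0 + \sqrt{1-\rho}Z_i$, exploiting the fact that under the large-sample approximation every row quantile becomes a deterministic function of $Z_0$ alone. For an un-mirrored row the quantile is $q(Z_0)$ from Equation~\ref{eq:qz}; repeating that derivation for a mirrored row (equivalently, applying the symmetry $\gamma\mapsto -\gamma$, or equivalently $z\mapsto -z$) produces the deterministic quantile $q(-Z_0)$.

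With these replacements in Equation~\ref{eq:pow}, I would group the $2R$ indicators into mirror couples to write
\beqn
\POW_{\z,\allocspaceR} \;=\; \oneover{2R}\sum_{i=1}^R J_i, \qquad J_i := I_i + I_i^m \in \braces{0,1,2}.
\eeqn
Each $J_i$ is a deterministic function of $Z_0$ and of the single independent coordinate $Z_i$, so conditional on $Z_0$ the $R$ variables $J_1,\ldots,J_R$ are iid with common mean $\mu(Z_0)$ and variance $\sigma^2(Z_0)$, both expressible in closed form in terms of $\Phi(\cdot)$, $q(\pm Z_0)$, $\gamma$, and $\rho$.

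The law of total variance then immediately yields
\beqn
\var{\POW_{\z,\allocspaceR}} \;=\; \frac{\expe{\sigma^2(Z_0)}}{4R} \;+\; \frac{\var{\mu(Z_0)}}{4}.
\eeqn
The first term is strictly decreasing in $R$ (since $\expe{\sigma^2(Z_0)} > 0$ whenever the test has non-trivial rejection probability, which holds for $0 < \alpha < 1/2$), while the second term is constant in $R$. Monotonicity of the square-root function then transfers this to monotone decrease of $\sesub{B_z}{\POW_{\z,\allocspaceR}}$ as $R$ increases through the attainable values, with limit $\half\sqrt{\var{\mu(Z_0)}}$.

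The main obstacle is showing this limit is strictly positive, i.e. that $\mu(\cdot)$ is non-constant whenever $\rho\in(0,1)$ and $\gamma>0$. I would compare the values at $z=0$ and at $|z|\to\infty$. Setting $z=0$ gives $\mu(0) = 2\,\Phi\parens{(\gamma - q(0))/\sqrt{1-\rho}}$. Asymptotic analysis of Equation~\ref{eq:qz} as $z\to+\infty$ forces the second $\Phi$-term to vanish so that $q(z)\sim\sqrt{\rho}z + \rho\gamma - \sqrt{1-\rho}\,\Phi^{-1}(2\alpha)$; hence $p(z):=\cprob{I_1=1}{Z_0=z}\to\Phi\parens{\sqrt{1-\rho}\,\gamma + \Phi^{-1}(2\alpha)}$ while $p(-z)\to 0$, giving $\lim_{z\to\infty}\mu(z)=\Phi\parens{\sqrt{1-\rho}\,\gamma + \Phi^{-1}(2\alpha)}$. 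Since the upper bound $1$ on this tail limit is not matched by $\mu(0)$ for generic $\rho\in(0,1)$ and $\gamma>0$, $\mu$ takes at least two distinct values, so $\var{\mu(Z_0)} > 0$ and the limiting SE is the strictly positive constant claimed. (The degenerate $\rho=0$ case recovers $\mu\equiv$ constant and thus a zero limit, consistent with $Z_0$ playing no role.)
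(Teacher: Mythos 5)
Your proof is correct and is essentially the paper's argument in different packaging: the paper expands $\expesub{B_Z}{\POW_{\z,\allocspaceR}^2}$ directly using the conditional independence of the mirror-pair indicators given $Z_0$ (with the random row quantiles replaced by the deterministic $q(\pm z)$), obtaining $\var{\POW_{\z,\allocspaceR}} \approx \parens{\tfrac{1}{4}(p+p_m+2p_b)-p_s}/R + (p_s - p^2)$, which is algebraically identical to your $\expe{\sigma^2(Z_0)}/(4R) + \var{\mu(Z_0)}/4$. The one place you go beyond the paper is the attempt to show $\var{\mu(Z_0)}>0$ --- the paper simply asserts the limiting constant is positive --- though your ``for generic $\rho,\gamma$'' step would need to be tightened into an actual verification that $\mu(0)\neq\lim_{z\to\infty}\mu(z)$ before it counts as a proof.
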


\begin{proof}

For a given $\z$, let $V_i$ denote the treatment effect estimator for the $i$th allocation and $V_i^m$ for its mirror defined as in Lemma~\ref{lemm:main_support},

\beqn
V_i &=& \sqrt{\rho}Z_0+\sqrt{1-\rho}Z_i+\gamma, \\
V^m_i &=& -\sqrt{\rho}Z_0-\sqrt{1-\rho}Z_i+\gamma
\eeqn

\noindent and let $I_i$ be the indicator that $V_i$ beats the $1 - \alpha$ quantile from the other $2R - 1$ estimators (and $I_i^m$ be the indicator that its mirror beats this quantile). The power conditional on $\z$ (Equation~\ref{eq:pow}) is

\beqn
\POW_{\z, \allocspaceR} = \oneover{2R} \sum_{i=1}^R (I_i + I_i^m).
\eeqn

\noindent The expected power of Equation~\ref{eq:pow_integral} is then

\bneqn\label{eq:pow_approx_for_se}
\POW = \oneover{2R} \sum_{i=1}^R (\expesub{B_Z}{I_i} + \expesub{B_Z}{I_i^m}) \approx \oneover{2R} (Rp + Rp_m) = p = p_m
\eneqn

\noindent where

\beqn
p      &:=& \int_\reals \cprob{I_i=1}{Z_0=z}\phi(z)dz,\\
p_m  &:=& \int_\reals \cprob{I_i^m=1}{Z_0=z}\phi(z)dz.
\eeqn

\noindent The approximation above in Equation~\ref{eq:pow_approx_for_se} is justified when $R$ becomes large. Under this asymptotic regime, the $1 - \alpha$ quantile converges to $q(z)$ defined in Equation~\ref{eq:qz}. Then $\expesub{B_Z}{I_i}$ becomes the integrand in the power expression of Equation~\ref{eq:asymptotic_power_integral}. The randomness in the true quantile will be a second-order effect. 

The variance of power can then be computed via

\beqn
\varsub{B_Z}{\POW_{\z, \allocspaceR}} &=& \expesub{B_Z}{\POW_{\z, \allocspaceR}^2} - \expesub{B_Z}{\POW_{\z, \allocspaceR}}^2 \\
&\approx& \frac{Rp+Rp_m+2Rp_b+4R(R-1)p_s}{4R^2} - p^2 \\
&=& \frac{\frac{1}{4}(p+p_m+2p_b) - p_s}{R}+ p_s - p^2
\eeqn

\noindent where $p_b$ and $p_s$ come from the covariance calculation:

\beqn
p_b &:=& \int_\reals \cprob{I_i=1, I_i^m=1}{Z_0=z} \phi(z) dz,\\
p_s &:=& \oneover{4} \int_\reals \squared{\cprob{I_i=1}{Z_0=z}+\cprob{I_i^m=1}{Z_0=z}} \phi(z)dz.
\eeqn

%

%
%
%
%
%

\noindent Since $p, p_m, p_s$ and $p_b$ are constants in $R$, this proves that the variance of the power monotonically decreases in $R$ to the positive constant $p_s - p^2$. 
\end{proof}

\pagebreak
\section{Additional Figures}

\begin{figure}[htp]
\centering
\includegraphics[width=6.6in]{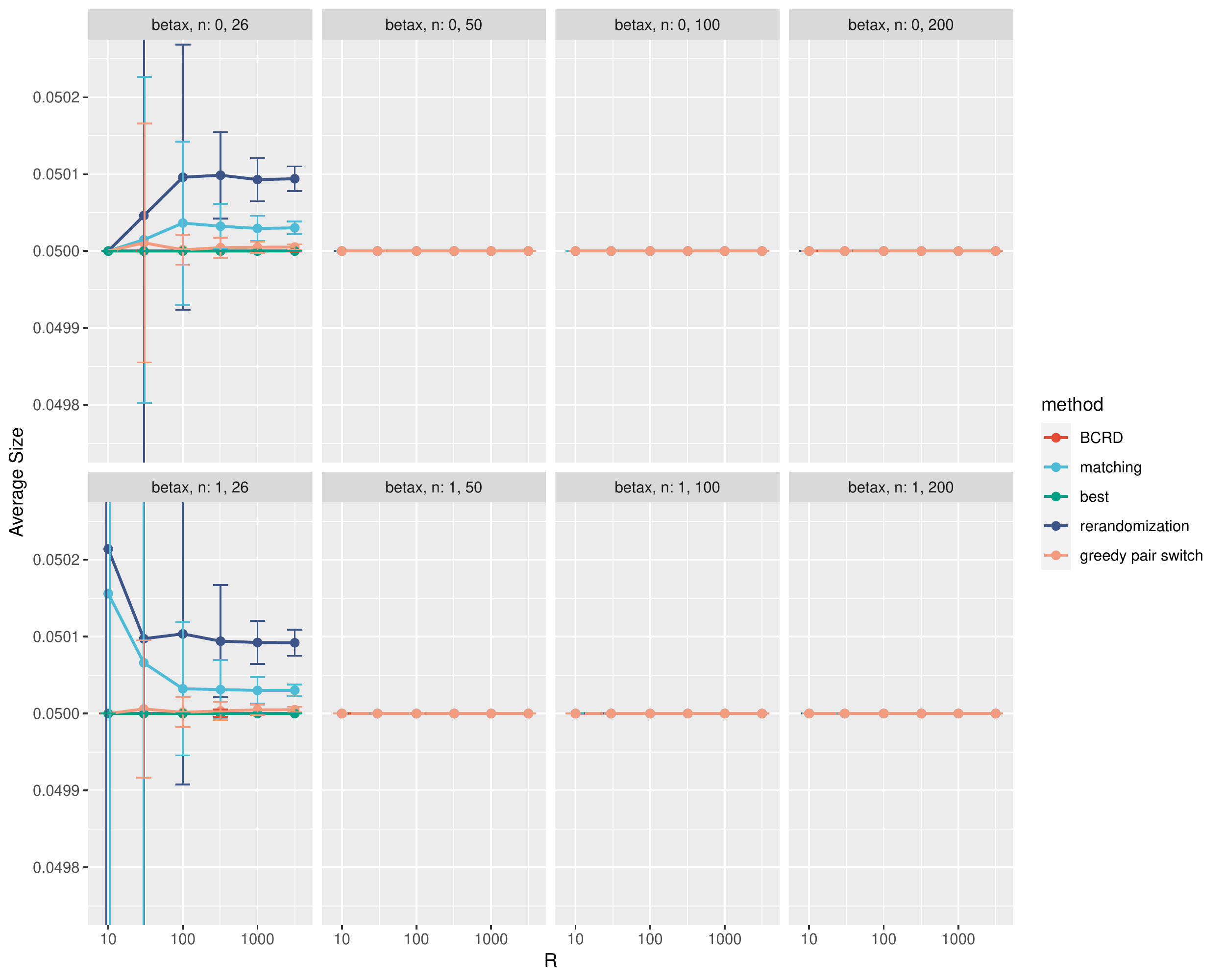}
\caption{Analogous to Figure~\ref{fig:average_power_sims} in the main text, this is the simulated size of the randomization test by number of allocation vectors in the design $R$ where $\beta = 0.25$, $\alpha = 0.05$ and $\beta_x \in \braces{0, 1}$. Individual plots correspond to different settings of the effect of the observed covariate $\beta_x$ and sample size $n$. Colors indicate the design strategy employed. Error bars are jittered slightly left-right and indicate 95\% confidence.}
\label{fig:size_by_method}
\end{figure}

\begin{figure}[h]
\centering
\includegraphics[width=6.6in]{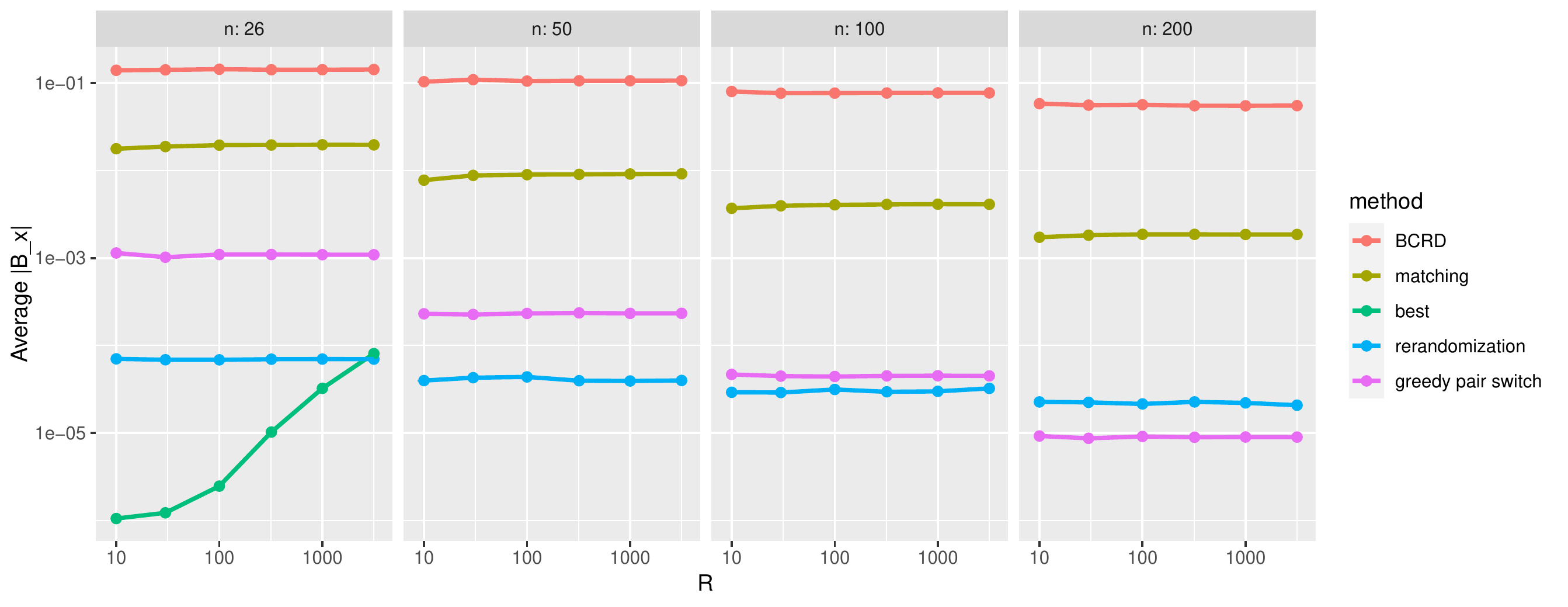}
\caption{Simulated $\abss{B_x}$ on the log scale for each design method considered by subset size $R$. Individual plots correspond to different settings the sample size $n$. Imbalance in $\x$ is independent of $\betaT, \beta_x$ and $\alpha$. Colors indicate the design strategy employed. Error bars are jittered slightly left-right and indicate 95\% confidence but are usually smaller than the dot width. One may have expectedthat as $R$ gets larger, the $B_x$ may increase appreciably lowering power. However, since the allocations spaces $\allocspace_D$ in these methods is exponentially large, we would need much larger $R$ than simulated to see this effect. This phenomenon would not be present for BCRD as all $\w$ have the same observed imbalance on average nor rerandomization as all $\w$ were selected to maintain a certain minimum $\abss{B_x}$.}
\label{fig:average_abs_Bxs}
\end{figure}


\begin{figure}[h]
\centering
\includegraphics[width=7in]{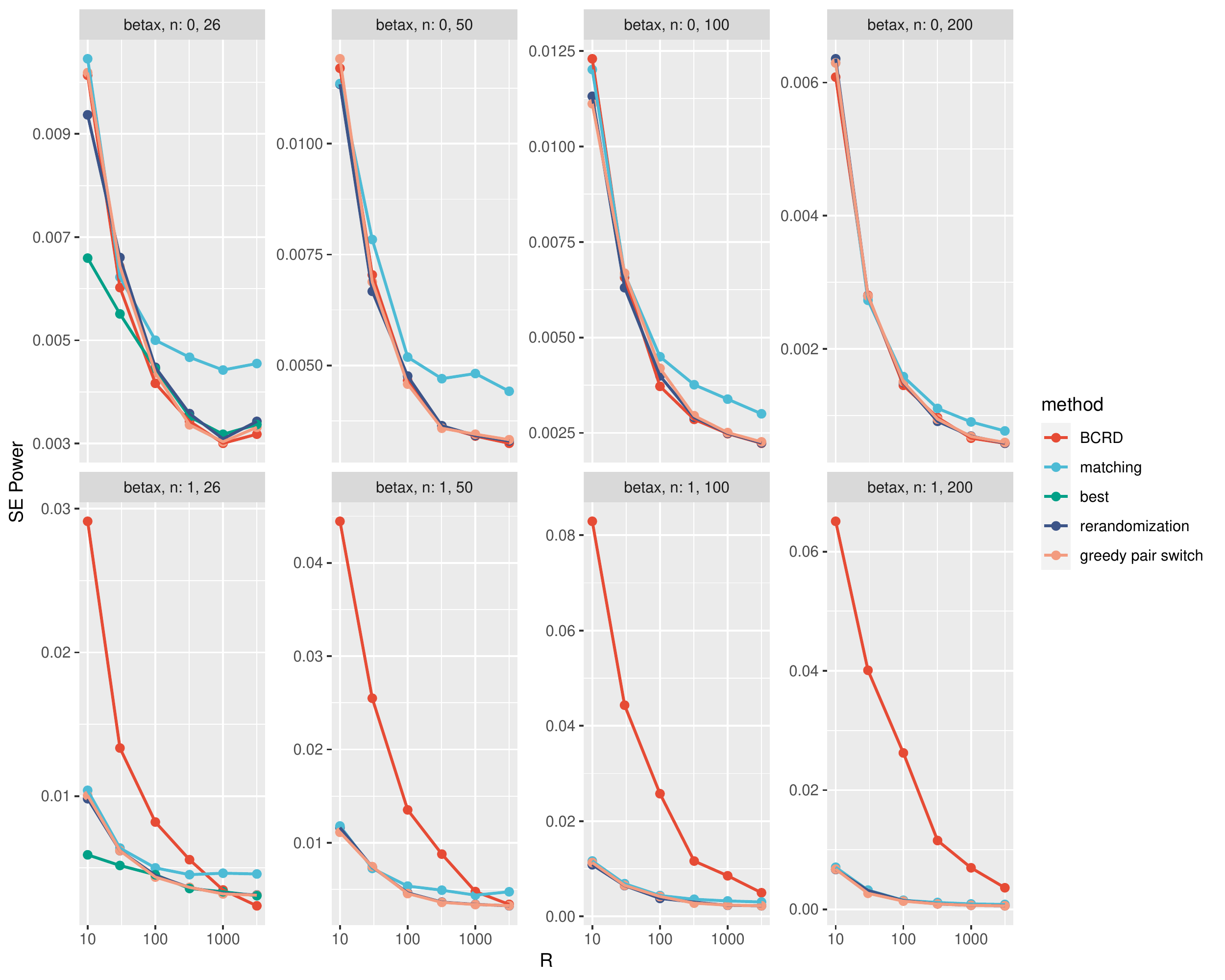}
\caption{Analogous to Figure~\ref{fig:average_power_sims} in the main text, this is the standard error of the power $\POW$ of the randomization test by number of allocation vectors in the design $R$ where $\beta = 0.25$, $\alpha = 0.05$ and $\beta_x \in \braces{0, 1}$. Individual plots correspond to different settings of the effect of the observed covariate $\beta_x$ and sample size $n$. Colors indicate the design strategy employed.}
\label{fig:ses_by_method}
\end{figure}

\end{document}